\documentclass{article}
\usepackage{graphicx,amsfonts,amsmath,mathrsfs,amssymb,amsthm,url,color,dsfont}
\usepackage{fancyhdr,indentfirst,bm,enumerate,natbib}
\usepackage[colorlinks=true,citecolor=blue]{hyperref}
                      
\def\oo{\infty}                   \def\d{\,\mathrm{d}}
\def\lm{\lambda}                  \def\th{\theta}
                 
\newcommand{\esssup}{\mathrm{ess\mbox{-}sup}}
 \newcommand{\essinf}{\mathrm{ess\mbox{-}inf}}

\newcommand{\E}{\mathbb{E}}
\newcommand{\R}{\mathbb{R}}
\newcommand{\N}{\mathbb{N}}

\newcommand{\I}{\mathbb{I}}
\renewcommand{\P}{\mathbb{P}}
\newcommand{\Q}{\mathbb{Q}}
\renewcommand{\(}{\left (}
\renewcommand{\)}{\right )}
\renewcommand{\[}{\left [}
\renewcommand{\]}{\right ]}
\newtheorem{theorem}{Theorem}[section]

\newtheorem{proposition}[theorem]{Proposition}
\newtheorem{definition}{Definition}[section]
\newtheorem{example}[theorem]{Example}

\newtheorem{remark}[theorem]{Remark}

\numberwithin{equation}{section}
\numberwithin{theorem}{section}

\renewcommand{\cite}{\citet}
\allowdisplaybreaks

\newcommand{\VaR}{\mathrm{VaR}}
\newcommand{\ES}{\mathrm{ES}}
\newcommand{\EVaR}{\mathrm{EVaR}}
\begin{document}
	
\title{Lambda R{\'e}nyi entropic value-at-risk}
	
\author{Zhenfeng Zou\thanks{School of Public Affairs, University of Science and Technology of China,  China. Email: \url{zfzou@ustc.edu.cn}}
}

\date{April, 2026}
\maketitle
	
\begin{abstract}

This paper introduces the Lambda extension of the R\'{e}nyi entropic value-at-risk ($\Lambda$-EVaR), a novel family of risk measures that unifies the flexible confidence level structure of the $\Lambda$-framework with the higher-moment sensitivity of EVaR. 
We define $\Lambda$-EVaR, establish its foundational properties including monotonicity, cash subadditivity, and quasi-convexity, and provide a complete axiomatic characterization showing that convexity, concavity in mixtures and cash additivity hold only when $\Lambda$ is constant. 
A dual representation and an extended Rockafellar-Uryasev-type formula are derived, enabling efficient computation. 
We further analyze the worst-case behavior of $\Lambda$-EVaR under Wasserstein and mean-variance uncertainty, obtaining closed-form expressions that reveal its robustness properties. 
The proposed measure bridges the gap between adaptive risk tolerance and moment-sensitive risk assessment, offering a versatile tool for modern risk management.

\medskip
		
\noindent \textbf{MSC2000 subject classification}: 91G70, 91B05
		
\noindent \textbf{Keywords}: Lambda risk measures; R\'{e}nyi entropic value-at-risk; Dual representation; Extended Rockafellar-Uryasev formula; Model uncertainty
\end{abstract}

\section{Introduction}

Risk measures serve as the cornerstone for quantifying, managing, and regulating risk in finance, insurance, and decision theory. 
Among the plethora of proposed measures, Value-at-Risk (VaR) and Expected Shortfall (ES) have achieved paradigmatic status due to their intuitive interpretation and regulatory adoption (e.g., Basel Accords). 
Both are defined with respect to a fixed confidence level $\alpha \in [0,1]$, which determines the portion of the loss distribution considered. 
While this fixed-level paradigm offers simplicity, it has been widely criticized for its rigidity. 
A single $\alpha$ cannot reflect the nuanced risk tolerance of a decision-maker who may wish to be highly conservative for moderate, more frequent losses, yet more accepting of truly extreme, catastrophic events---a preference pattern often observed in practice and embedded in progressive regulatory frameworks.

To address this limitation, the innovative Lambda ($\Lambda$) risk measure framework was introduced. 
It replaces the constant $\alpha$ with a function $\Lambda: \R \to [0, 1]$, mapping each potential loss threshold $x$ to a local confidence level $\Lambda(x)$. 
This generalization allows the risk assessment to adapt dynamically to the loss size, formally encoding the intuition that ``how much risk we see depends on how bad the loss could be." 
The framework originated with $\Lambda$-VaR, where $\Lambda$ is taken as increasing, linking higher loss thresholds with higher confidence levels \citep{FMP14}. 
$\Lambda$-VaR has been shown to satisfy several properties that are useful in finance, including elicitability \citep{BB15}, robustness \citep{BPR17}, and quasi-star-shapeness \citep{HWWX25}.
\cite{BP22} further provided an axiomatic characterization of $\Lambda$-VaR, which particularly justifies the choice of $\Lambda$ as a (weakly) decreasing function. 
From a practical perspective, $\Lambda$-VaR has also been studied in various applied contexts, such as estimation and backtesting \citep{HMP18, CP18}, distributionally robust optimization \citep{HL26}, risk sharing \citep{Liu25, XH25}, and optimal insurance design \citep{BCHW25}.

Subsequently, the $\Lambda$ framework was recently extended from VaR to ES, employing a decreasing $\Lambda$ function by \cite{BHWW25}.
This modification aligns more naturally with economic and regulatory reasoning: for extreme tail losses (high $x$), insisting on an extremely high confidence level ($\Lambda(x) \approx 1$) may lead to prohibitive and unrealistic capital requirements; thus, allowing $\Lambda(x)$ to decrease for large $x$ provides necessary flexibility. 
Both $\Lambda$-VaR and $\Lambda$-ES preserve essential properties such as monotonicity, law invariance, and, notably, cash subadditivity---a relaxation of cash additivity pertinent in environments with stochastic or ambiguous interest rates \citep{ER09, CMMM11}.
Moreover, \cite{BHWW25} demonstrated that $\Lambda$-ES is the smallest quasi-convex and law-invariant risk measure dominating $\Lambda$-VaR, and provided both a dual representation and a Rockafellar-Uryasev-type formula \citep{RU02} for $\Lambda$-ES.

While $\Lambda$-VaR and $\Lambda$-ES offer valuable flexibility in confidence levels, they nonetheless inherit fundamental limitations from their classical counterparts.
VaR is not a convex risk measure, failing to incentivize diversification \citep{ADEH99}. 
ES, while coherent (convex, monotone, cash additive, and positively homogeneous), is a spectral risk measure that depends solely on the tail expectation, disregarding information about the shape of the loss distribution beyond the $\alpha$-quantile \cite{Acebi02}. 
In modern finance, where assets often exhibit heavy tails, skewness, and kurtosis, a risk measure sensitive to higher moments is crucial for accurate risk assessment and capital allocation. 
This raises a natural and pressing question: Can we construct a $\Lambda$-type risk measure that is both convex (or quasi-convex) and sensitive to higher-moment tail behavior?

The R\'{e}nyi Entropic Value-at-Risk (EVaR) emerges as an ideal candidate to answer this question. 
Introduced and further developed within a rigorous dual framework by \cite{PS20} and an adjusted version by \cite{ZWXH23}, EVaR is a coherent risk measure defined on $L^p$ spaces. 
For a fixed $\alpha$, $\EVaR_\alpha^p$ is derived from an optimization problem involving the R\'{e}nyi entropy of order $q$ (where $1/p+1/q=1$), effectively imposing a constraint on the divergence of a probability measure from the reference measure $\P$. 
Its primal representation admits a computationally convenient Rockafellar-Uryasev-type formula \citep{RU02}. 
Crucially, $\EVaR_\alpha^p$ dominates ES and incorporates information from the $p$-th moment of the loss tail. 
As $p$ increases, it interpolates between ES (at $p=1$) and more conservative, and higher-moment sensitive measures, making it a powerful and flexible tool for applications where tail dispersion matters \citep{PS20, DPR10}.

Building upon the success of the $\Lambda$-framework for VaR and ES, and motivated by the superior theoretical properties of EVaR, this paper introduces and systematically analyzes its Lambda extension: the Lambda R\'{e}nyi Entropic Value-at-Risk, denoted by $\EVaR^p_\Lambda$.
Our work synthesizes two active strands of research---flexible confidence levels and entropy-based risk measures---to create a unified and powerful risk assessment tool.
Our work advances the literature in three directions.

First, we define $\EVaR^p_\Lambda$ and investigate its axiomatic properties. 
We examine under what conditions it retains convexity, cash additivity, or positive homogeneity---key properties that distinguish classical risk measures from their Lambda counterparts. 
This analysis reveals how the flexibility introduced by a non-constant $\Lambda$ trades off with these classical axioms.
	
Second, we derive alternative representations that make $\EVaR^p_\Lambda$ analytically and computationally tractable. 
Specifically, we establish a dual representation in terms of R\'{e}nyi entropy and an extended Rockafellar-Uryasev formula that reduces its evaluation to a two dimensional optimization problem. 
These representations not only deepen the theoretical understanding but also facilitate practical implementation.

Third, we analyze the robustness of $\EVaR^p_\Lambda$ under model uncertainty. 
We obtain closed-form worst-case bounds when the underlying distribution is known only through moment constraints or lies within a Wasserstein ball. 
These results demonstrate how the Lambda structure interacts with distributional ambiguity, a question of both theoretical and practical importance in risk management.

Through these investigations, we show that $\EVaR^p_\Lambda$ provides a flexible yet mathematically sound framework that generalizes both $\Lambda$-VaR and $\Lambda$-ES while incorporating sensitivity to higher moments---a feature absent in existing Lambda risk measures.

The remainder of the paper is organized as follows. 
Section \ref{sec2} reviews necessary background on risk measures, the Lambda framework, and the classical R\'{e}nyi EVaR. 
Section \ref{sec3} introduces $\Lambda$-EVaR, establishes its basic properties, and characterizes when it reduces to a classical risk measure. 
Section \ref{sec4} presents dual representations, an extended Rockafellar-Uryasev formula, and worst-case results under Wasserstein and mean-variance uncertainty. 
Section \ref{sec5} concludes the paper.

\section{Preliminaries}\label{sec2}

Let $L^0$ denote the space of all random variables on an atomless probability space $(\Omega, \mathscr{F}, \P)$. 
For $p \geq 1$, let $L^p = L^p(\Omega, \mathscr{F}, \P)$ be the set of all random variables on $(\Omega, \mathscr{F}, \P)$ with finite $p$th-moment and let $L^\oo = L^\oo(\Omega, \mathscr{F}, \P)$ be the space of essentially bounded random variables. 
Throughout the paper, for $X \in L^0$, a positive [resp. negative] value of $X$ represents a financial loss [resp. profit]. 
We write $X \sim F$ to indicate that the distribution function of $X$ is $F$.  
All expectations and variances are computed with respect to $\P$ unless specified otherwise.
For two random variables $X, Y\in L^0$, we write $X\stackrel {\rm d}= Y$ when $X$ and $Y$ have the same distribution.
Denote $\R = (-\oo, \oo)$, $\overline{\R} = [-\oo, \oo]$ and $\R_+ = [0, \oo)$.
For any $x, y \in \overline{\R}$, write $x \wedge y = \max\{x, y\},\ x \vee y = \min\{x, y\},\ x_+ = x \vee 0$, and $x_- = x \wedge 0$. 
For a function $f : \R \to \R$ and a point $x \in \R$, we write $f(x^-) = \lim_{y \uparrow x} f(y)$ and $f(x^+) = \lim_{y \downarrow x} f(y)$, whenever these limits exist.
Let ${\cal M}_c(\R)$ denote the set of compactly supported distributions on $\R$.

\subsection{Basic terminology}

In this subsection, we briefly review the basic concepts and properties of risk measures used throughout the paper. 
For a thorough introduction to the theory of risk measures, we refer the reader to the monographs \cite{Delb12, FS16}.

A \textit{risk measure} is a functional $\rho: \mathcal{X} \to \overline{\R}$, where $\mathcal{X}$ is a suitable space of financial positions (typically $L^p$), which represents the minimal amount of capital that has to be raised and injected in a financial position $X$ to ensure that $X$ is acceptable (safe).
To characterize the behavior of a risk measure in a mathematically rigorous and economically meaningful way, several axiomatic properties have been widely adopted in the literature. 
The most commonly considered axioms for risk measures include the following: for $X, Y\in \cal X$,
\begin{itemize}
 \item[(A1)] Monotonicity: $\rho(X)\le \rho(Y)$ if $X \le Y$;
 \item[(A2)] Cash additivity: $\rho(X + m) = \rho(X) + m$ for all $m\in\R$;
 \item[(A3)] Subadditivity: $\rho(X + Y) \leq \rho(X) + \rho(Y)$;
 \item[(A4)] Positive homogeneity: $\rho(\lambda X) = \lambda\rho(X)$ for $\lm\ge 0$;
 \item[(A5)] Convexity: $\rho(\lambda X +(1 - \lambda) Y) \leq \lambda \rho(X) + (1 - \lambda) \rho(Y)$ for all $\lm\in [0,1]$;
 \item[(A6)] Law invariance: $\rho(X)=\rho(Y)$ if $X\stackrel {d}= Y$.
\end{itemize}
Note that the axiomatic properties listed above are not mutually independent; some may be implied by others depending on the chosen framework. 
Different selections of these properties lead to distinct classes of risk measures commonly studied in the literature. For instance:
\begin{itemize}
 \item A risk measure satisfying monotonicity (A1) and cash additivity (A2) is called a monetary risk measure.;
 \item If, in addition, it satisfies convexity (A5), it is referred to as a convex risk measure.;
 \item A coherent risk measure is a monetary risk measure that also fulfills positive homogeneity (A3) and positive homogeneity (A4).
\end{itemize}
In empirical and practical applications, it is essential that a risk measure can be estimated from observable data. 
This requirement is formally captured by the property of law invariance (A6), which ensures that the risk measure depends only on the distribution of the financial position. 
Throughout this paper, we restrict our attention to law invariant risk measures, as is standard in most applied and statistical settings.

To account for environments with non-constant interest rates, cash additivity has been relaxed to the weaker property of cash subadditivity. 
Specifically, a risk measure $\rho$ is called cash subadditive if $\rho(X + m) \leq \rho(X) + m$ for all $X \in \cal X$ and $m \in \R_+$ \citep{ER09}.
While cash subadditivity reflects the impact of varying discount rates, the diversification effect in this setting is characterized by quasi-convexity rather than convexity. 
That is, $\rho(\lambda X +(1 - \lambda) Y) \leq \max\{\rho(X), \rho(Y)\}$ for all $X, Y \in \cal X$ and $\lm \in [0, 1]$ \citep{CMMM11}.
More recently, \cite{HWWX25} studied the broader class of cash subadditive risk measures that satisfy only monotonicity and cash subadditivity, without requiring quasi-convexity, thereby extending the framework to accommodate a wider range of risk-assessment practices.

Finally, we revisit the definitions of concavity (and quasi-concavity, respectively) in the context of mixtures.
For a law invariant functional $\rho$ on $\cal X$, the mapping $F \mapsto \rho(X_F)$ on ${\cal M}_c(\R)$ is concave (resp. quasi-concave), where $X_F$ is a random variable with distribution $F \in {\cal M}_c(\R)$.

\subsection{Risk measures and their Lambda counterparts}\label{sec22}

The concept of Lambda risk measures generalizes traditional risk measures by allowing the confidence level to depend on the loss threshold. 
Instead of fixing a single confidence level $\alpha \in [0,1]$, one uses a function $\Lambda: \R \to [0,1]$, where $\Lambda(x)$ reflects the risk-tolerance at the loss level $x$. 
This idea was originally introduced for Value-at-Risk (VaR) by \cite{FMP14}, who considered an increasing function $\Lambda$, and was later extended to Expected Shortfall (ES) by \cite{BHWW25} who adopted a decreasing specification of $\Lambda$.
Here, as throughout the paper, the terms ``increasing" and ``decreasing" are understood in the non-strict (weak) sense.

Given a law invariant risk measure $\rho_\alpha$ indexed by a confidence level $\alpha \in [0,1]$, its Lambda counterpart $\rho_\Lambda$ is defined as
\begin{equation}
\rho_\Lambda(X) := \sup_{x \in \R} \left\{\rho_{\Lambda(x)}(X) \wedge x\right\} ~~~ X \in \cal X.
\end{equation}
Intuitively, for each threshold $x$ we compute the risk of $X$ at the ``local" confidence level $\Lambda(x)$, compare it with the threshold itself, and then take the supremum over all thresholds. 
The definition ensures that $\rho_\Lambda$ inherits many structural properties from the family $\{\rho_\alpha\}$, while gaining flexibility to adapt the confidence level to the size of the loss.

While the $\Lambda$-framework in principle allows for both increasing and decreasing specifications, the decreasing choice aligns naturally with economic intuition and regulatory practice. 
When the loss threshold $x$ is low, the decision-maker is typically concerned with relatively frequent, moderate losses and may therefore adopt a high confidence level (i.e., $\Lambda(x)$ close to 1). 
Conversely, for very high thresholds $x$ that correspond to extreme loss, a lower confidence level (i.e., $\Lambda(x)$ close to 0) is often more appropriate, as insisting on extremely high confidence may lead to unrealistically conservative capital requirements. 
This decreasing pattern mirrors the common regulatory philosophy that allows for more flexibility in the treatment of extreme tail risks.

Moreover, a decreasing $\Lambda$ guarantees that the resulting $\Lambda$-risk measure retains desirable properties such as elicitability and the axiomatic justification for $\Lambda$-VaR \citep{BB15, BP22}. 
For these reasons, we restrict our attention to decreasing $\Lambda$ throughout the paper, which is also the convention adopted in the recent literature on $\Lambda$-ES.

The two most prominent examples are Lambda VaR and Lambda ES.
Before we recall these two risk measures, we give the formal definitions for VaR and ES.
VaR and ES at confidence level $\alpha\in (0,1)$, are defined by
$$
   \VaR_\alpha(X)= \inf\{x:\ \P(X\le x)\ge \alpha\}
$$
and
\begin{equation*}
 \label{eq-1101}
   \ES_\alpha(X)=\frac {1}{1-\alpha} \int^1_\alpha {\rm VaR}_s(X) \d s,
\end{equation*}
respectively. In addition, let $\VaR_0(X)=\essinf(X)$, $\ES_0(X)=\E [X]$ and $\VaR_1(X)=\ES_1(X)=\esssup(X)$.
\begin{example}[$\Lambda$-VaR]{\rm 
For a decreasing function $\Lambda: \R \to [0,1]$,
\begin{equation}
\VaR_\Lambda(X) := \sup_{x \in \R} \left\{\VaR_{\Lambda(x)}(X) \wedge x\right\} ~~~ X \in L^0.
\end{equation}
Here $\VaR_{\Lambda(x)}$ is the usual VaR at level $\Lambda(x)$. 
The functional $\VaR_\Lambda$ satisfies monotonicity, law invariance, cash subadditivity, quasi-convex, and quasi-concave in mixtures but not cash additive or positively homogeneous. 
For more details basic properties, we refer to \cite{XH25}.
}
\end{example}

\begin{example}[$\Lambda$-ES]{\rm 
For a decreasing function $\Lambda: \R \to [0,1]$,
\begin{equation}
\ES_\Lambda(X) := \sup_{x \in \R} \left\{\ES_{\Lambda(x)}(X) \wedge x\right\} ~~~ X \in L^1.
\end{equation}
Here $\ES_{\Lambda(x)}$ is the usual ES at level $\Lambda(x)$. 
The functional $\ES_\Lambda$ satisfies monotonicity, law invariance, cash subadditivity, quasi-convex, and quasi-concave in mixtures but not cash additive or positively homogeneous. 
For more details basic properties, we refer to \cite{BHWW25}.
}
\end{example}

Both $\VaR_\Lambda$ and $\ES_\Lambda$ can be written in the alternative form
\begin{equation*}
\rho_\Lambda(X) = \inf_{x \in \R} \left\{\rho_{\Lambda(x)}(X) \vee x\right\} 
\end{equation*}

The success of the $\Lambda$-framework for VaR and ES motivates the study of a Lambda version of the R{\'e}nyi entropic value-at-risk (EVaR), which we denote by $\EVaR_\Lambda^p$. 
In the next subsection we recall the definition of the (classical) R{\'e}nyi EVaR, and then in Sections \ref{sec3} and \ref{sec4} we introduce and analyze its $\Lambda$-counterpart.

\subsection{R{\'e}nyi entropic value-at-risk}

In this subsection we review the definition and basic properties of the R{\'e}nyi entropic value-at-risk (EVaR), which was introduced by \cite{PS20} and later extended to the adjusted version by \cite{ZWXH23}.
The construction of EVaR is based on the R{\'e}nyi entropy of order $q \in [1, \oo]$ for a probability measure $\Q$ with respect to $\P$, a concept that we first recall.

\begin{definition}[R{\'e}nyi entropy]
  \label{def-entropy1}
For $q \in [1, \oo]$, the R\'{e}nyi entropy of order $q$ for a probability measure $\Q$ with respect to $\P$ is defined as
\begin{equation}
  \label{eq-1103}
    H_q\(\frac{\d \Q}{\d \P}\):= \left\{\begin{array}{ll}
        \E \[\frac{\d \Q}{\d \P} \log \frac{\d \Q}{\d \P}\], &  {\rm for}\ q = 1\ {\rm and}\ \Q \ll \P, \\
        \log\left\| \frac{\d \Q}{\d \P} \right\|_{\oo}, & {\rm for}\ q = +\oo\ {\rm and}\ \Q \ll \P, \\[3pt]
        \displaystyle\frac{1}{q-1} \log \E \[\(\frac{\d \Q}{\d \P}\)^q\], & {\rm for}\ q \in (1, \oo)\ {\rm and}\ \Q \ll \P,  \\
\oo, & {\rm otherwise}, \end{array} \right.
\end{equation}
provided that the expectations are finite. Here we adopt the convention $0\log 0=0$.
\end{definition}

For a fixed $p \geq 1$ and its conjugate exponent $q$ satisfying $1/p + 1/q = 1$, then the entropic value-at-risk $\EVaR^p_{\alpha}: L^p \to \R$ of order $p$ at level $\alpha \in [0, 1)$ is defined as the following supremum over probability measures whose R\'{e}nyi entropy $H_q$ does not exceed $\log \frac{1}{1-\alpha}$:
\begin{equation*}
   \EVaR^p_{\alpha}(X):= \sup\left\{\E_\Q[X]: H_q\(\frac{\d \Q}{\d \P}\) \le \log\frac{1}{1-\alpha}\right \}.
\end{equation*}
For the limiting case $\alpha=1$, define $\EVaR^p_1(X)= \esssup (X)$ for any $p$.

As shown by \cite{PS20} \citep[see also][]{DPR10}, for $\alpha \in [0,1]$, the risk measure $\EVaR^p_{\alpha}$ with $p \geq 1$ admits an equivalent primal formulation that is often more convenient for computation and analysis:
\begin{equation}\label{EVaR}
\EVaR_\alpha^p(X) = \min_{t \in \overline{\R}} \left\{t + \(\frac{1}{1-\alpha}\)^{1/p} \(\E\[\(X - t\)_+^p\]\)^{1/p} \right\} ~~~ X \in L^p,
\end{equation}
where we use the conventions $0/0 = 0$ and $x/0 = \oo$ for $x > 0$. 
Associated with the minimisation problem \eqref{EVaR} are two auxiliary quantities that will play an important role later:
\begin{equation*}
   \VaR_{\alpha}^p(X) := \inf \underset{t \in \overline{\R}}{\arg\min} \left\{t + \(\frac{1}{1-\alpha}\)^{1/p} \(\E\[\(X - t\)_+^p\]\)^{1/p} \right\},
\end{equation*}
and
\begin{equation*}
     \VaR_{\alpha}^{p+}(X) := \sup \underset{t \in \overline{\R}}{\arg\min} \left\{t + \(\frac{1}{1-\alpha}\)^{1/p} \(\E\[\(X - t\)_+^p\]\)^{1/p} \right\}.
\end{equation*}
It follows \citep[see Lemma~A.1 of][]{ZWXH23} that the set of minimisers is exactly the interval
\begin{equation}\label{EVaR_minimizer}
\underset{t \in \overline{\R}}{\arg\min} \left\{t + \(\frac{1}{1-\alpha}\)^{1/p} \(\E\[\(X - t\)_+^p\]\)^{1/p} \right\} = \left\{\begin{array}{ll}
        \[\VaR_{\alpha}^p(X), \VaR_{\alpha}^{p+}(X)\], &  {\rm if}\ \alpha \in [0, 1), \\
        \VaR_1^p(X), & {\rm if}\ \alpha = 1,
\end{array} \right.
\end{equation}
with $\VaR_1^p(X) = \lim_{\alpha \to 1}  \VaR_\alpha^p(X) = \esssup(X)$. 
For more properties on functional $\VaR_{\alpha}^p$ or $\VaR_{\alpha}^{p+}$, we refer Section Appendice A.1 of \cite{ZWXH23}.
In particular, when $p=1$, one recovers the classical ES:
$$
   \VaR_{\alpha}^1(X) = \VaR_{\alpha}(X),\quad \VaR_{\alpha}^{1+}(X)=\VaR_{\alpha}^+(X)\ \ {\rm and}\ \  \EVaR^1_{\alpha}(X) = \ES_{\alpha}(X).
$$

Several fundamental properties of EVaR are worth highlighting.  
For each fixed $\alpha \in [0,1]$, the risk measure $\EVaR_\alpha^p$ is a law invariant, coherent risk measure on $L^p$; that is, it satisfies monotonicity, cash additivity, positive homogeneity, and convexity.
Moreover, it is continuous with respect to the $L^p$-norm \citep[][Proposition 2.5]{ZWXH23}, meaning that $\rho(X_n) \to \rho(X)$ for all $X, X_1, X_2, \cdots \in L^p$ and $X_n \overset{L^p}{\to} X$ as $n \to \oo$. 
Furthermore, $\EVaR_\alpha^p$ is consistent with the $(p+1)$-icx order, i.e., 
$$\EVaR_\alpha^p(X) \leq \EVaR_\alpha^p(Y) ~~~ {\rm whenever} ~~~ X \leq_{(p+1)\hbox{-} \rm icx} Y,$$ 
where $X \leq_{p\hbox{-} \rm icx} Y$ is defined by the condition
$$X \leq_{p\hbox{-} \rm icx} Y \iff \left\|(X - x)_+\right\|_{p-1} \leq \left\|(Y - x)_+\right\|_{p-1},\ \forall x \in \R.$$  
For more details on the $p$increasing convex ($p$-icx) order, we refer to \cite{SS07}. 
These properties make EVaR a natural and mathematically sound building block for its $\Lambda$-counterpart, which we introduce in Section \ref{sec3}.

\section{Lambda EVaR}\label{sec3}

We now introduce the main object of study in this paper: the Lambda extension of the R{\'e}nyi EVaR. 
The construction follows the general pattern described in Section \ref{sec22}, where a classical risk measure indexed by a confidence level is generalized by allowing the level to depend on a loss threshold via a decreasing function $\Lambda$.

\begin{definition}[$\Lambda$-EVaR]
For a decreasing function $\Lambda : \R \to [0, 1]$, the $\Lambda$-R{\'e}nyi entropic value-at-risk ({\rm $\Lambda$-EVaR}) is defined as the risk measure $\EVaR_\Lambda^p: L^p \to \overline{\R}$ with $p \geq 1$ given by
\begin{equation}\label{eq-2026-1111}
\EVaR_\Lambda^p(X) = \sup_{x \in \R} \left\{\EVaR_{\Lambda(x)}^p(X) \wedge x\right\} ~~~ X \in L^p.
\end{equation}
\end{definition}

The interpretation is analogous to that of $\Lambda$-VaR and $\Lambda$-ES: for each candidate loss level $x$, we compute the EVaR at the local confidence level $\Lambda(x)$, compare it with $x$ itself, and then take the supremum over all thresholds. 
The definition ensures that $\EVaR_\Lambda^p$ inherits the properties of the family $\{\EVaR_\alpha^p\}_{\alpha \in [0,1]}$ while introducing the flexibility to vary the confidence level with the magnitude of potential losses.

Before studying the detailed properties of $\Lambda$-EVaR, we first establish that the mapping is well defined on $L^p$ and clarify when it takes finite values. 
Consequently, the risk measure $\EVaR_\Lambda^p$ is always well-defined (though it may take the value $\oo$) on the space $L^p$.
The following proposition provides basic bounds and a finiteness criterion.

\begin{proposition}\label{prop-finite}
Let $\Lambda : \R \to [0, 1]$ be a decreasing function. 
The mapping $\EVaR_\Lambda^p: L^p \to \overline{\R}$ satisfies
\begin{equation*}
-\oo < \E[X] \leq \EVaR_\Lambda^p(X) \leq \EVaR_1^p(X) = \esssup(X) ~~~ X \in L^p.
\end{equation*}
In particular, $\EVaR_\Lambda^p(X)$ is finite on $L^p$ if and only if $\VaR_1(X) < \oo$ or $\Lambda$ is not constantly $1$.
Moreover, when $\Lambda$ is left-continuous and $\EVaR_\Lambda^p(X)$ is finite on $L^p$, the supremum in \eqref{eq-2026-1111} can be replaced by a maximum.
\end{proposition}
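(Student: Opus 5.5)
We read $\wedge$ in \eqref{eq-2026-1111} as a minimum, as in the $\Lambda$-VaR and $\Lambda$-ES literature. This is the only reading for which the definition is meaningful: with a maximum the supremum would always be $+\oo$.

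The whole argument rests on two facts about the map $\alpha\mapsto \EVaR^p_\alpha(X)$ for fixed $X\in L^p$:
\begin{itemize}
\item[(i)] It is increasing on $[0,1]$. The admissible set $\{H_q\le \log\frac1{1-\alpha}\}$ grows with $\alpha$, and $\EVaR^p_1=\esssup$ dominates everything.
\item[(ii)] It starts at $\EVaR^p_0(X)=\E[X]$: when $\alpha=0$ the entropy constraint forces $\Q=\P$.
\end{itemize}
Write $g(x):=\EVaR^p_{\Lambda(x)}(X)$ and $f(x):=\min\{g(x),x\}$. Since $\Lambda$ is decreasing, (i) makes $g$ decreasing in $x$.

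\emph{Bounds.} For the lower bound, (ii) and (i) give $g(x)\ge \E[X]$ for every $x$. Choosing $x=\E[X]$ yields $f(\E[X])\ge \E[X]$, which is finite because $X\in L^p\subset L^1$. For the upper bound, $f(x)\le g(x)\le \EVaR^p_1(X)=\esssup(X)$ for every $x$, again by (i).

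\emph{Finiteness.}
\begin{itemize}
\item If $\Lambda\equiv1$, then $f(x)=\min\{\esssup X,x\}$, so $\EVaR^p_\Lambda(X)=\esssup(X)$. This is finite exactly when $\VaR_1(X)<\oo$.
\item If $\Lambda(x_0)<1$ for some $x_0$, then $g(x_0)$ is finite: take $t=0$ in the primal formula \eqref{EVaR} and use $\E[X_+^p]<\oo$. For $x\ge x_0$ we have $f(x)\le g(x)\le g(x_0)$. For $x<x_0$ we have $f(x)\le x<x_0$. Hence $\EVaR^p_\Lambda(X)\le \max\{x_0,g(x_0)\}<\oo$.
\end{itemize}
Together with the finite lower bound, this gives the stated equivalence.

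\emph{Attainment.} This is the main step. First I establish left-continuity of $g$ when $\Lambda$ is left-continuous.
\begin{itemize}
\item As $y\uparrow x$, left-continuity and monotonicity give $\Lambda(y)\downarrow\Lambda(x)$.
\item For $\alpha<1$, $\EVaR^p_\alpha(X)$ is an infimum over $t$ of functions continuous in $\alpha$, so it is upper semicontinuous in $\alpha$.
\item Hence $\lim_{y\uparrow x}g(y)\le g(x)$. Since $g$ is decreasing, the reverse inequality is automatic, so $g(x^-)=g(x)$.
\item If $\Lambda(x)=1$, then $\Lambda(y)=1$ for all $y<x$, and $g$ is constant there.
\end{itemize}

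Next, let $S=\{x: g(x)\ge x\}$.
\begin{itemize}
\item $S$ is a down-set, because $g$ is decreasing while $x$ is increasing.
\item $S$ is nonempty, since $g\ge \E[X]$ while $x\to-\oo$.
\item $S$ is bounded above in the finite case. If $\Lambda\equiv1$, finiteness means $\esssup X<\oo$ and $g\equiv\esssup X$, so $S=(-\oo,\esssup X]$. Otherwise $g\le g(x_0)<\oo$ on $[x_0,\oo)$, so $S\subset(-\oo,\max\{x_0,g(x_0)\}]$.
\end{itemize}
Put $x^*=\sup S$. By left-continuity, $g(x^*)=\lim_{x\uparrow x^*}g(x)\ge x^*$, so $x^*\in S$ and $f(x^*)=x^*$.

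Finally I check that $x^*$ is a maximiser of $f$.
\begin{itemize}
\item For $x\le x^*$: $f(x)\le x\le x^*$.
\item For $x>x^*$: any $x'\in(x^*,x)$ lies outside $S$, so $g(x)\le g(x')<x'$. Letting $x'\downarrow x^*$ gives $f(x)\le g(x)\le x^*$.
\end{itemize}
Hence $\EVaR^p_\Lambda(X)=f(x^*)$ and the supremum is a maximum. The delicate point is the semicontinuity of $\alpha\mapsto\EVaR^p_\alpha$ from the right. If the infimum representation proves awkward near the boundary, I would instead invoke the continuity results in Appendix A.1 of \cite{ZWXH23}.
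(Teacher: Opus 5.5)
Your proof is correct. The bounds and the finiteness criterion follow the paper's argument: the $t=0$ estimate in \eqref{EVaR} and the split at $x_0$. You also make explicit the choice $x=\E[X]$ behind the lower bound, which the paper calls trivial. Your reading of $\wedge$ as a minimum is the one the paper actually uses in its proofs, although its notation list states the convention the other way round.

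The attainment step is where your route genuinely differs. The paper works with a maximizing sequence $(x_n)$:
\begin{itemize}
\item it rules out $x_n\to-\oo$ by the lower bound;
\item it treats $x_n\to\oo$ by identifying the value with $\EVaR^p_{\Lambda(\oo)}(X)$ and setting $x^*=\EVaR^p_\Lambda(X)$;
\item in the bounded case it extracts a subsequence $x_{n_k}\uparrow x^*$ and passes to the limit using left-continuity of $\Lambda$.
\end{itemize}
You instead locate the maximizer directly as the crossing point $x^*=\sup\{x: g(x)\ge x\}$. This uses only that $g$ is decreasing and left-continuous.

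Your route buys several things.
\begin{itemize}
\item Boundedness of $S$ replaces the case analysis on the maximizing sequence.
\item You prove the continuity fact the paper uses silently in both of its cases: that $\Lambda(x_n)\downarrow\alpha$ forces $\EVaR^p_{\Lambda(x_n)}(X)\to\EVaR^p_\alpha(X)$. You get this from upper semicontinuity of $\alpha\mapsto\EVaR^p_\alpha(X)$ on $[0,1)$ together with monotonicity.
\item You avoid the paper's unjustified claim that a bounded maximizing sequence has a subsequence increasing to its limit point. A subsequence may only approach from the right. That case is harmless, since then $g(x_{n_k})\le g(x^*)$, but the paper does not treat it.
\item You get extra information: the maximum is attained at $x^*=\EVaR^p_\Lambda(X)$ itself, with $g(x^*)\ge x^*\ge g(x)$ for all $x>x^*$. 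This is in line with the sandwich characterization of Proposition~\ref{prop32}.
\end{itemize}

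The paper's sequential argument, for its part, is the generic template from the $\Lambda$-ES literature. It is what the paper reuses, by the ``similar argument'', for the minimum in Proposition~\ref{prop-inf}.
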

\begin{proof}
For any $x \in \R$, we have
\begin{equation*}
\E[X] \leq \EVaR_{\Lambda(x)}^p(X) \leq \EVaR_1^p(X) ~~~ X \in L^p.
\end{equation*}
Therefore, the relation $-\oo < \E[X] \leq \EVaR_\Lambda^p(X) \leq \EVaR_1^p(X)$ holds trivially.

If $\VaR_1(X) < \oo$, then $\EVaR_\Lambda^p(X) \leq \EVaR_1^p(X) = \esssup(X) = \VaR_1(X) < \oo$. 
Now assume that $\Lambda$ is not identically $1$. 
Then there exists $x_0 \in \R$, such that $0 \leq \Lambda(x_0) < 1$. 
Since $\Lambda$ is decreasing, we have $0 \leq \Lambda(x) \leq \Lambda(x_0) < 1$ for any $x \geq x_0$, and therefore
\begin{equation*}
\EVaR_{\Lambda(x)}^p(X) \leq \(\frac{1}{1-\Lambda(x)}\)^{1/p} \(\E\[|X|^p\]\)^{1/p} \leq \(\frac{1}{1-\Lambda(x_0)}\)^{1/p} \(\E\[|X|^p\]\)^{1/p} < \oo ~~~ x \geq x_0.
\end{equation*}
For $x < x_0$, we get $\EVaR_{\Lambda(x)}^p(X) \wedge x \leq x \leq x_0 < \oo$.
Thus $\EVaR_\Lambda^p(X)$ is finite on $L^p$ whenever $\Lambda$ is not constantly equal to $1$.
For the converse, suppose $\VaR_1(X) = \oo$ and $\Lambda \equiv 1$.
Then clearly $\EVaR_\Lambda^p(X) = \oo$, which would contradict the finiteness of $\EVaR_\Lambda^p$.

Finally, when $\Lambda$ is left-continuous and $\EVaR_\Lambda^p(X)$ is finite on $L^p$, we show that there exists some point $x^* \in \R$ such that $\EVaR_\Lambda^p(X) = \EVaR_{\Lambda(x^*)}^p(X) \wedge x^*$.
By the definition of $\EVaR_\Lambda^p$, there exists a sequence of $\{x_n\}_{n \in \N}$ such that  
$$\lim_{n \to \oo} \EVaR_{\Lambda(x_n)}^p(X) \wedge x_n = \EVaR_\Lambda^p(X).$$
If $\{x_n\}$ is a bounded sequence, then we can take its subsequence $\{x_{n_k}\}$ such that $x_{n_k} \uparrow x^*$ as $k \to \oo$.
Thus, $\lim_{k \to \oo} \Lambda(x_{n_k}) = \Lambda(x^*)$ by the left-continuous of $\Lambda$.
Consequencely,
\begin{equation*}
\EVaR_\Lambda^p(X) = \lim_{k \to \oo} \EVaR_{\Lambda(x_{n_k})}^p(X) \wedge x_{n_k} = \EVaR_{\Lambda(x^*)}^p(X) \wedge x^*.
\end{equation*}
If $\{x_n\}$ is an unbounded sequence. We first assume $\lim_{n \to \oo} x_n = -\oo$, then $\EVaR_\Lambda^p(X) = -\oo$.
This is impossible.
So, we next assume $\lim_{n \to \oo} x_n = \oo$. Then $\EVaR_\Lambda^p(X) = \lim_{n \to \oo} \EVaR_{\Lambda(x_n)}^p(X) \wedge x_n = \EVaR_{\Lambda(\oo)}^p(X)$.
Thus, we take $x^* = \EVaR_\Lambda^p(X)$ and $\EVaR_{\Lambda(x^*)}^p(X) \geq \EVaR_{\Lambda(\oo)}^p(X)$ $= \EVaR_\Lambda^p(X)$.
Consequencely,
\begin{equation*}
\EVaR_\Lambda^p(X) = \lim_{n \to \oo} \EVaR_{\Lambda(x_n)}^p(X) \wedge x_n = \EVaR_{\Lambda(\oo)}^p(X) = \EVaR_\Lambda^p(X) = \EVaR_{\Lambda(x^*)}^p(X) \wedge x^*.
\end{equation*}
This completes the proof.
\end{proof}

Proposition \ref{prop-finite} guarantees that $\Lambda$-EVaR is a proper risk measure on the appropriate domain. 
The next result provides a useful characterization of the points at which the supremum in \eqref{eq-2026-1111} is attained. 
This characterization plays a central role in many subsequent proofs and also leads to an alternative representation of $\Lambda$-EVaR.

\begin{proposition}\label{prop32}
Let $\Lambda : \R \to [0, 1]$ be a decreasing function.
Then for $X \in L^p$ and $x \in \R$,
\begin{equation}\label{eq-2026-1112}
\EVaR_{\Lambda(x^+)}^p(X) \leq x \leq \EVaR_{\Lambda(x^-)}^p(X)  \iff \EVaR_{\Lambda}^p(X) = x,
\end{equation}
and
\begin{equation}\label{eq-2026-1113}
\EVaR_{\Lambda(x)}^p(X) = x \Longrightarrow \EVaR_{\Lambda}^p(X) = x.
\end{equation}
If $\Lambda$ is continuous, then \eqref{eq-2026-1113} becomes an equivalence.
\end{proposition}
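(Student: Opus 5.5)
The plan is to reduce everything to the monotone function $g(y):=\EVaR^p_{\Lambda(y)}(X)$, $y\in\R$, and the truncated function $h(y):=\min\{g(y),y\}$. Throughout I read the operation in \eqref{eq-2026-1111} as a minimum, as in the proof of Proposition~\ref{prop-finite}, so that $\EVaR^p_\Lambda(X)=\sup_y h(y)$.

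\textbf{Two properties of $\alpha\mapsto\EVaR^p_\alpha(X)$.} First, $\alpha\mapsto\EVaR^p_\alpha(X)$ is increasing, since the entropy ball $\{H_q\le\log\frac1{1-\alpha}\}$ grows with $\alpha$. Because $\Lambda$ is decreasing, $g$ is therefore decreasing. Moreover, for $y<x<z$ we have $\Lambda(y)\ge\Lambda(x^-)\ge\Lambda(x)\ge\Lambda(x^+)\ge\Lambda(z)$, hence
\[
g(y)\ge \EVaR^p_{\Lambda(x^-)}(X)\ge g(x)\ge \EVaR^p_{\Lambda(x^+)}(X)\ge g(z).
\]
Second, for the necessity part I need $\alpha\mapsto\EVaR^p_\alpha(X)$ to be continuous on $[0,1]$, so that $g(y)\to\EVaR^p_{\Lambda(x^\pm)}(X)$ as $y\to x^\pm$.
\begin{itemize}
\item Right-continuity: by \eqref{EVaR}, $\EVaR^p_\alpha(X)$ is an infimum of functions continuous and increasing in $\alpha$. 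It is therefore upper semicontinuous and increasing, hence right-continuous.
\item Left-continuity on $[0,1)$: by the dual definition, the constraint set at level $\alpha$ is the union of the constraint sets at levels $\beta<\alpha$ (up to the boundary, which is handled by convexity of $H_q$, mixing with $\P$). So $\EVaR^p_\alpha(X)=\sup_{\beta<\alpha}\EVaR^p_\beta(X)$.
\item At $\alpha=1$: use $\VaR^p_1=\lim_{\alpha\to1}\VaR^p_\alpha=\esssup$, or directly the bounds $\EVaR^p_\alpha\ge\VaR_\alpha$ as $\alpha\to1$.
\end{itemize}
I expect this continuity step to be the main technical obstacle. It is the only place where the specific structure of EVaR enters.

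\textbf{Sufficiency in \eqref{eq-2026-1112}.} This direction needs no continuity. Assume $\EVaR^p_{\Lambda(x^+)}(X)\le x\le\EVaR^p_{\Lambda(x^-)}(X)$.
\begin{itemize}
\item Upper bound: for $y\le x$, $h(y)\le y\le x$. For $y>x$, $h(y)\le g(y)\le\EVaR^p_{\Lambda(x^+)}(X)\le x$. Hence $\sup_y h(y)\le x$.
\item Lower bound: for $y<x$, $h(y)\ge\min\{\EVaR^p_{\Lambda(x^-)}(X),y\}\ge\min\{x,y\}=y$. Letting $y\uparrow x$ gives $\sup_y h(y)\ge x$.
\end{itemize}
Together these give $\EVaR^p_\Lambda(X)=x$.

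\textbf{Necessity in \eqref{eq-2026-1112}.} Suppose $\EVaR^p_\Lambda(X)=x$ and argue by contradiction.
\begin{itemize}
\item If $\EVaR^p_{\Lambda(x^+)}(X)>x$: as $y\downarrow x$, $\Lambda(y)\uparrow\Lambda(x^+)$, so continuity gives $g(y)>x$ for some $y>x$. Then $h(y)>x$, a contradiction.
\item If $\EVaR^p_{\Lambda(x^-)}(X)=c<x$: as $y\uparrow x$, $\Lambda(y)\downarrow\Lambda(x^-)$, so by continuity there are $\delta>0$ and $c'<x$ with $g(y)\le c'$ on $(x-\delta,x)$. Then
\begin{itemize}
\item on $(x-\delta,\infty)$, $h(y)\le\max\{c',\EVaR^p_{\Lambda(x^-)}(X)\}<x$;
\item on $(-\infty,x-\delta]$, $h(y)\le x-\delta$.
\end{itemize}
So $\sup_y h(y)<x$, again a contradiction.
\end{itemize}

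\textbf{The statement \eqref{eq-2026-1113}.} If $g(x)=x$, the chain of inequalities above gives $\EVaR^p_{\Lambda(x^+)}(X)\le x\le\EVaR^p_{\Lambda(x^-)}(X)$, and \eqref{eq-2026-1112} yields $\EVaR^p_\Lambda(X)=x$. If $\Lambda$ is continuous, then $\Lambda(x^\pm)=\Lambda(x)$, so the left side of \eqref{eq-2026-1112} reads $\EVaR^p_{\Lambda(x)}(X)\le x\le\EVaR^p_{\Lambda(x)}(X)$, i.e.\ $g(x)=x$. Hence \eqref{eq-2026-1113} becomes an equivalence.
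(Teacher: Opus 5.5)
Your proof is correct and follows the paper's route: case analysis on $y$ versus $x$ for sufficiency, and contradiction using thresholds near $x$ for necessity. It also makes explicit the continuity of $\alpha\mapsto\EVaR^p_\alpha(X)$, which the paper's steps ``there would exist $y>x$ such that $\EVaR^p_{\Lambda(y)}(X)>x$'' and ``a symmetric argument'' use without saying so. One small fix in that continuity step: $H_q$ is not convex in $\d\Q/\d\P$ for $q>1$, so justify the boundary approximation instead by $\E[((1-\lambda)+\lambda Z)^q]\le(1-\lambda)+\lambda\E[Z^q]<\E[Z^q]$ for $Z=\d\Q/\d\P$ with $\E[Z^q]>1$ and $\lambda\in(0,1)$, and analogously with $\|\cdot\|_\infty$ when $p=1$. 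Alternatively, note that $s\mapsto\inf_{t}\{t+s\|(X-t)_+\|_p\}$ is concave and finite on $[1,\infty)$, hence continuous on $(1,\infty)$, which gives continuity of $\EVaR^p_\alpha(X)$ in $\alpha\in(0,1)$ via $s=(1-\alpha)^{-1/p}$.
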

\begin{proof}
If $\Lambda \equiv 1$, \eqref{eq-2026-1112} and \eqref{eq-2026-1113} hold trivially.
Next, we consider that $\Lambda$ is not constantly $1$.
Let's first prove the equivalence stated in \eqref{eq-2026-1112}. 

($\Longrightarrow$) Assume $\EVaR_{\Lambda(x^+)}^p(X) \leq x \leq \EVaR_{\Lambda(x^-)}^p(X)$.
Using the definition of $\EVaR_\Lambda^p$, we obtain
\begin{equation*}
\EVaR_\Lambda^p(X) = \sup_{y \in \R} \left\{\EVaR_{\Lambda(y)}^p(X) \wedge y\right\} \geq \EVaR_{\Lambda(x^-)}^p(X) \wedge x = x.
\end{equation*}
To obtain the reverse inequality, we consider three cases.
\begin{itemize}
\item If $y > x$, then $\EVaR_{\Lambda(y)}^p(X) \leq \EVaR_{\Lambda(x^+)}^p(X) \leq x$, so,
\begin{equation*}
\EVaR_{\Lambda(y)}^p(X) \wedge y = \EVaR_{\Lambda(y)}^p(X) \leq x.
\end{equation*}

\item If $y < x$, then $\EVaR_{\Lambda(y)}^p(X) \geq \EVaR_{\Lambda(x^-)}^p(X) \geq x$, hence,
\begin{equation*}
\EVaR_{\Lambda(y)}^p(X) \wedge y = y < x.
\end{equation*}

\item If $y = x$, then trivially $\EVaR_{\Lambda(y)}^p(X) \wedge y = \EVaR_{\Lambda(x)}^p(X) \wedge x \leq x$.
\end{itemize}
Thus, $\EVaR_{\Lambda(y)}^p(X) \wedge y \leq x$ for every $y$, which yields $\EVaR_\Lambda^p(X) \leq x$.
Consequently, $\EVaR_\Lambda^p(X)=x$.

($\Longleftarrow$) Now suppose $\EVaR_\Lambda^p(X) = x$.
If we had $x < \EVaR_{\Lambda(x^+)}^p(X)$, then there would exist $y > x$ such that $\EVaR_{\Lambda(y)}^p(X) > x$ and $\EVaR_{\Lambda(y)}^p(X) \wedge y > x$.
Thus,
\begin{equation*}
\EVaR_{\Lambda}^p(X) \geq \EVaR_{\Lambda(y)}^p(X) \wedge y > x
\end{equation*}
contradicting $\EVaR_\Lambda^p(X) = x$.
Therefore, $x \geq \EVaR_{\Lambda(x^+)}^p(X)$. 
A symmetric argument (using values $y < x$) gives $x \leq \EVaR_{\Lambda(x^-)}^p(X)$.
This completes the proof of \eqref{eq-2026-1112}.

The implication \eqref{eq-2026-1113} follows directly from \eqref{eq-2026-1112}, because
$\EVaR_{\Lambda(x)}^p(X) = x$ implies $\EVaR_{\Lambda(x^+)}^p(X) \leq x \leq \EVaR_{\Lambda(x^-)}^p(X)$ when one notes that $\Lambda(x^+) \leq \Lambda(x) \leq \Lambda(x^-)$.

Finally, if $\Lambda$ is continuous, then $\EVaR_{\Lambda(x^+)}^p(X) = \EVaR_{\Lambda(x)}^p(X) = \EVaR_{\Lambda(x^-)}^p(X)$ by $\Lambda(x^+) = \Lambda(x) = \Lambda(x^-)$.
In this case \eqref{eq-2026-1112} reduces to $\EVaR_{\Lambda(x)}^p(X) = x \Longleftrightarrow \EVaR_{\Lambda}^p(X) = x$, which shows that \eqref{eq-2026-1113} is indeed an equivalence.
\end{proof}

The equivalence \eqref{eq-2026-1112} states that $x$ is the value of $\Lambda$-EVaR exactly when it lies between the left- and right-hand limits of the ordinary EVaR with confidence levels $\Lambda(x^-)$ and $\Lambda(x^+)$. 
This sandwich condition is often easier to verify than directly computing the supremum in \eqref{eq-2026-1111}.

An immediate consequence of Proposition \ref{prop32} is a dual representation of $\Lambda$-EVaR as an infimum rather than a supremum. 
This alternative form is particularly convenient for establishing quasi-concavity and for concave in mixtures arguments.

\begin{proposition}\label{prop-inf}
Let $\Lambda : \R \to [0, 1]$ be a decreasing function.
Then for $X \in L^p$ and $x \in \R$,
\begin{eqnarray}\label{eq-2026-1114}
\EVaR_\Lambda^p(X) = \inf_{x \in \R} \left\{\EVaR_{\Lambda(x)}^p(X) \vee x\right\}.
\end{eqnarray}
Moreover, the infimum in \eqref{eq-2026-1114} can be replaced by a minimum when $\Lambda$ is right-continuous and $\EVaR_\Lambda^p(X)$ is finite on $L^p$.
\end{proposition}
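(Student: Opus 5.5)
The plan is to write $f(y) := \EVaR^p_{\Lambda(y)}(X)$ and treat the identity \eqref{eq-2026-1114} as a minimax-type statement for the scalar function $f$. The only structural fact needed is that $f$ is decreasing in $y$. This holds because $\alpha \mapsto \EVaR^p_\alpha(X)$ is increasing: the entropy constraint $H_q \le \log\frac{1}{1-\alpha}$ is relaxed as $\alpha$ grows. Since $\Lambda$ is decreasing, $f$ is decreasing. In the displayed formulas I read $\wedge$ as the minimum and $\vee$ as the maximum, which is the only reading under which \eqref{eq-2026-1111} is meaningful. Denote $c := \EVaR^p_\Lambda(X)$ and $g(y) := \max\{f(y), y\}$.

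\emph{Step 1 ($\sup \le \inf$).} I will show $\min\{f(y), y\} \le \max\{f(x), x\}$ for every pair $x, y \in \R$.
\begin{itemize}
\item If $y \le x$, then $\min\{f(y), y\} \le y \le x \le g(x)$.
\item If $y > x$, then monotonicity of $f$ gives $\min\{f(y), y\} \le f(y) \le f(x) \le g(x)$.
\end{itemize}
Taking the supremum over $y$ and then the infimum over $x$ gives $c \le \inf_x g(x)$. In particular, if $c = \oo$ the identity already holds. By Proposition \ref{prop-finite}, $c \ge \E[X] > -\oo$, so the only remaining case is $c \in \R$.

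\emph{Step 2 ($\inf \le \sup$ when $c$ is finite).} Here I invoke Proposition \ref{prop32}: since $\EVaR^p_\Lambda(X) = c$, the sandwich \eqref{eq-2026-1112} gives $\EVaR^p_{\Lambda(c^+)}(X) \le c \le \EVaR^p_{\Lambda(c^-)}(X)$. Now take any $y > c$. Since $\Lambda(y) \le \Lambda(c^+)$, we get $f(y) \le \EVaR^p_{\Lambda(c^+)}(X) \le c < y$, so $g(y) = y$. Therefore $\inf_x g(x) \le \inf_{y > c} y = c$, and combined with Step 1 this proves \eqref{eq-2026-1114}.

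\emph{Step 3 (attainment).} Assume in addition that $\Lambda$ is right-continuous and $c$ is finite. Then $\Lambda(c) = \Lambda(c^+)$, so $f(c) = \EVaR^p_{\Lambda(c^+)}(X) \le c$. Hence $g(c) = c$, and the infimum is attained at $x = c$.

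The main point requiring care is Step 2, namely extracting the right-limit inequality $\EVaR^p_{\Lambda(c^+)}(X) \le c$. This comes directly from the ($\Longleftarrow$) direction of Proposition \ref{prop32}. There $\EVaR^p_{\Lambda(c^+)}(X)$ is understood via the monotone limit $\Lambda(c^+)$ of the levels, so no continuity of $\alpha \mapsto \EVaR^p_\alpha$ is needed. Everything else is elementary case analysis.
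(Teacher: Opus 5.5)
Your proof is correct, and it is organised somewhat differently from the paper's. The paper first excludes $\Lambda\equiv 1$, so that $c$ is finite by Proposition~\ref{prop-finite}. It then sets $x=c$ and uses \emph{both} halves of the sandwich \eqref{eq-2026-1112}: the right half for $\inf\le c$, and the left half $c\le \EVaR^p_{\Lambda(c^-)}(X)$ to show $\EVaR^p_{\Lambda(y)}(X)\vee y\ge c$ for $y<c$. The case $\Lambda\equiv1$ is checked by hand, and the attainment claim is omitted.

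Your Step~1 replaces the left half by the weak-duality inequality $\min\{f(y),y\}\le\max\{f(x),x\}$. This uses only monotonicity of $f$ and holds whether or not $c$ is finite, so the case $\Lambda\equiv 1$, $\esssup X=\infty$ is covered automatically.

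Your Step~2 is the paper's right-half argument made rigorous. The paper bounds the infimum by $\EVaR^p_{\Lambda(c^+)}(X)\vee c$, which is not itself a term of the infimum: the term at $y=c$ involves $\Lambda(c)\ge\Lambda(c^+)$, which goes the wrong way. Your restriction to $y>c$ and passage to $\inf_{y>c}y$ closes that gap. In fact Step~2 does not need Proposition~\ref{prop32} at all. For $y>c$, the definition of $c$ gives $\min\{f(y),y\}\le c<y$, hence $f(y)\le c$ and $g(y)=y$.

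Your Step~3 supplies the omitted attainment proof and identifies $x^*=c$ as a minimiser, which is what the extended Rockafellar--Uryasev formula later relies on.

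One caveat on your closing remark: continuity of $\alpha\mapsto\EVaR^p_\alpha(X)$ is not avoided, only hidden inside Proposition~\ref{prop32}. Its ($\Longleftarrow$) direction passes from $\EVaR^p_{\Lambda(c^+)}(X)>c$ to some $y>c$ with $\EVaR^p_{\Lambda(y)}(X)>c$, which needs left-continuity in the level. That left-continuity does hold for EVaR, and it is genuinely what makes attainment in Step~3 work; without it one could have $f(y)\le c$ for all $y>c$ yet $f(c)>c$.
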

\begin{proof}
If $\Lambda$ is not constantly $1$.
Let $\EVaR_\Lambda^p(X) = x$.
By \eqref{eq-2026-1112}, we obtain $\EVaR_{\Lambda(x^+)}^p(X) \leq x \leq \EVaR_{\Lambda(x^-)}^p(X)$.
Consequently,
\begin{eqnarray*}
\inf_{y \in \R} \left\{\EVaR_{\Lambda(y)}^p(X) \vee y\right\} \leq \EVaR_{\Lambda(x^+)}^p(X) \vee x = x.
\end{eqnarray*}
To obtain the reverse inequality, it suffices to show that $\EVaR_{\Lambda(y)}^p(X) \vee y \geq x$ for all $y \in \R$ whenever $\EVaR_\Lambda^p(X) = x$. 
We consider two cases.
\begin{itemize}
\item \textbf{Case $y \geq x$:} Since $y \geq x$, we have
\begin{equation*}
\EVaR_{\Lambda(y)}^p(X) \vee y \geq y \geq x.
\end{equation*}

\item \textbf{Case $y < x$:} Because $y < x$, the monotonicity of $\Lambda$ gives $\Lambda(y) \geq \Lambda(x^-)$.
Hence $\EVaR_{\Lambda(y)}^p(X) \geq \EVaR_{\Lambda(x^-)}^p(X) \geq x$, and therefore
\begin{equation*}
\EVaR_{\Lambda(y)}^p(X) \vee y \geq \EVaR_{\Lambda(y)}^p(X) \geq x.
\end{equation*}
\end{itemize}
Thus $\EVaR_{\Lambda(y)}^p(X) \vee y \geq x$ for all $y$, which implies
\begin{eqnarray*}
\inf_{y \in \R} \left\{\EVaR_{\Lambda(y)}^p(X) \vee y\right\} \geq x.
\end{eqnarray*}
Therefore, we conclude that $\inf_{y \in \R} \left\{\EVaR_{\Lambda(y)}^p(X) \vee y\right\} = \EVaR_\Lambda^p(X)$.

If $\Lambda \equiv 1$, then $\EVaR_\Lambda^p(X) = \esssup(X)$ and
\begin{equation*}
\EVaR_\Lambda^p(X) =\esssup(X) = \inf_{x \in \R} \left\{\esssup(X) \vee x\right\} = \inf_{x \in \R} \left\{\EVaR_{\Lambda(x)}^p(X) \vee x\right\}.
\end{equation*}
The proof that the infimum in \eqref{eq-2026-1114} can be replaced by a minimum follows an argument similar to that of Proposition \ref{prop-finite} and is therefore omitted.
Thus, we complete the proof.
\end{proof}

The following proposition collects its most important properties, showing that it inherits many desirable characteristics from the underlying EVaR family while also exhibiting the flexibility typical of Lambda-type risk measures.

\begin{proposition}
For any decreasing functions $\Lambda, \Lambda' : \R \to [0, 1]$, the risk measure $\EVaR_\Lambda^p$ satisfies the following properties: 
\begin{enumerate}[{\rm (i)}]
\item $\EVaR_\Lambda^p \geq \EVaR_{\Lambda'}^p$ when $\Lambda \geq \Lambda'$;
\item $\EVaR_\Lambda^p$ is monotone;
\item $\EVaR_\Lambda^p \geq \ES_\Lambda \geq \VaR_\Lambda$;
\item $\EVaR_\Lambda^p$ is quasi-convex;
\item $\EVaR_\Lambda^p$ is normalized;
\item $\EVaR_\Lambda^p$ is cash subadditive;
\item $\EVaR_\Lambda^p$ is $(p+1)$-consistent;
\item $\EVaR_\Lambda^p$ is quasi-concave in mixtures;
\item $\EVaR_\Lambda^p$ is $L^p$-continuous when $\Lambda$ takes values in $[0, 1)$.
\end{enumerate}
\end{proposition}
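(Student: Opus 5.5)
We will prove the nine items essentially one at a time, working mostly with the supremum definition \eqref{eq-2026-1111} and, where an infimum is more convenient, with Proposition \ref{prop-inf}. Two facts about the classical family are used throughout. First, $\alpha\mapsto\EVaR^p_\alpha(X)$ is increasing; this is visible from \eqref{EVaR}, since $(1/(1-\alpha))^{1/p}$ increases in $\alpha$ and $\EVaR^p_1=\esssup$. Second, each $\EVaR^p_\alpha$ is monotone, convex, cash additive, law invariant, $(p+1)$-icx consistent and $L^p$-continuous.

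The order-type items are pointwise.
\begin{itemize}
\item For (i), $\Lambda\ge\Lambda'$ gives $\EVaR^p_{\Lambda(x)}(X)\wedge x\ge \EVaR^p_{\Lambda'(x)}(X)\wedge x$ for every $x$; take the supremum.
\item For (ii) and (vii), if $X\le Y$ (or $X\le_{(p+1)\text{-icx}}Y$), then $\EVaR^p_{\Lambda(x)}(X)\le\EVaR^p_{\Lambda(x)}(Y)$ for every $x$, and the supremum preserves this.
\item For (iii), we use the classical inequalities $\EVaR^p_\alpha\ge\ES_\alpha\ge\VaR_\alpha$ for each $\alpha$ and take the supremum over $x$. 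If these are not available as stated, we obtain $\EVaR^p_\alpha\ge\ES_\alpha$ from \eqref{EVaR} via Jensen/Lyapunov, $(\E[(X-t)_+^p])^{1/p}\ge\E[(X-t)_+]$, together with the Rockafellar--Uryasev formula for ES.
\item For (v), we check $\EVaR^p_\Lambda(0)=\sup_x\{0\wedge x\}=0$.
\end{itemize}

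Quasi-convexity (iv) uses the sup form. Fix $\lambda\in[0,1]$ and $x$. By convexity of $\EVaR^p_{\Lambda(x)}$,
\[
\EVaR^p_{\Lambda(x)}(\lambda X+(1-\lambda)Y)\wedge x\le \max\{\EVaR^p_{\Lambda(x)}(X),\EVaR^p_{\Lambda(x)}(Y)\}\wedge x,
\]
and the right-hand side is at most $\max\{\EVaR^p_\Lambda(X),\EVaR^p_\Lambda(Y)\}$. Taking the supremum over $x$ gives the claim.

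Cash subadditivity (vi) uses cash additivity of each $\EVaR^p_\alpha$. For $m\ge0$ and $x\in\R$,
\[
\EVaR^p_{\Lambda(x)}(X+m)\wedge x=\big(\EVaR^p_{\Lambda(x)}(X)\wedge(x-m)\big)+m.
\]
Since $\Lambda$ is decreasing, $\Lambda(x)\le\Lambda(x-m)$, so this is at most $\EVaR^p_{\Lambda(x-m)}(X)\wedge(x-m)+m\le \EVaR^p_\Lambda(X)+m$.

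Quasi-concavity in mixtures (viii) uses Proposition \ref{prop-inf}. It suffices to know that, for fixed $\alpha$, the map $F\mapsto\EVaR^p_\alpha(X_F)$ is quasi-concave on mixtures. This holds because the map $F\mapsto t+(1-\alpha)^{-1/p}(\int(y-t)_+^p\,\d F)^{1/p}$ is quasi-concave in $F$ for each fixed $t$: the inner integral is linear in $F$ and the outer map is increasing. The infimum over $t$ of quasi-concave functions is again quasi-concave. Then $F\mapsto\EVaR^p_{\Lambda(x)}(X_F)\vee x$ is quasi-concave, and taking the infimum over $x$ preserves quasi-concavity. We need to take care that an infimum of quasi-concave maps is quasi-concave, since the superlevel sets are intersections; this is standard.

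The main obstacle is (ix). When $\Lambda<1$ we have $\Lambda(x)\le\Lambda(x_0)<1$ for $x\ge x_0$, but $\Lambda(x)$ may tend to $1$ as $x\to-\infty$, so the moduli of continuity of $\EVaR^p_{\Lambda(x)}$ are not uniform in $x$. We plan to localize as follows. Let $X_n\to X$ in $L^p$ and set $v=\EVaR^p_\Lambda(X)$, which is finite by Proposition \ref{prop-finite}. Since $\E[X_n]\to\E[X]$, and the finiteness bound of Proposition \ref{prop-finite} is uniform in $n$ on bounded $L^p$-sets, all values $\EVaR^p_\Lambda(X_n)$ lie in a compact interval $[a,b]$. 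In the sup (or inf) representation, only thresholds $x\in[a-1,b+1]$ then matter. On that range $\Lambda(x)\le\Lambda(a-1)<1$, and the coherent bound
\[
|\EVaR^p_\alpha(Y)-\EVaR^p_\alpha(Z)|\le(1-\alpha)^{-1/p}\|Y-Z\|_p
\]
(sublinearity plus the estimate used in Proposition \ref{prop-finite}) gives continuity uniform in $x\in[a-1,b+1]$. The lower and upper semicontinuity then follow from the sup and inf forms respectively.
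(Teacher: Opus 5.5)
Your proposal is correct. Items (i)–(v), (vii) and (viii) follow the paper's proof. Items (i), (ii), (iii) and (vii) go by pointwise comparison and then the supremum. Quasi-convexity uses convexity of each $\EVaR^p_\alpha$. Quasi-concavity in mixtures goes through the infimum form \eqref{eq-2026-1114}; the paper proves the slightly stronger concavity in mixtures of $\EVaR^p_\alpha$ via concavity of $u\mapsto u^{1/p}$, while you prove quasi-concavity of each $t$-slice, which is enough.

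For (vi) the paper just uses $(a+c)\wedge x\le (a\wedge x)+c$ for $c\ge 0$. Your shift to $\Lambda(x-m)$ is valid but unnecessary, since $\EVaR^p_{\Lambda(x)}(X)\wedge(x-m)\le\EVaR^p_{\Lambda(x)}(X)\wedge x$ already. So monotonicity of $\Lambda$ plays no role there.

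The genuine difference is (ix). The paper never needs uniformity in $x$. Let $v=\EVaR^p_\Lambda(X)$ and take any $y<v<z$. The sandwich \eqref{eq-2026-1112} gives $\EVaR^p_{\Lambda(y)}(X)>y$ and $\EVaR^p_{\Lambda(z)}(X)<z$. $L^p$-continuity of the two fixed functionals $\EVaR^p_{\Lambda(y)}$ and $\EVaR^p_{\Lambda(z)}$ (both levels are $<1$) transfers these strict inequalities to $X_n$ for large $n$. The sup form then gives $\EVaR^p_\Lambda(X_n)\ge y$, and the inf form gives $\EVaR^p_\Lambda(X_n)\le z$. So the obstacle you flag, $\Lambda(x)\to 1$ as $x\to-\infty$, is sidestepped by looking at only two thresholds. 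In fact, the idea in your closing sentence already suffices without localization: use one fixed near-optimal threshold in the sup form for lower semicontinuity and one in the inf form for upper semicontinuity.

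Your localized route is still sound, with these steps:
\begin{itemize}
\item The a priori bounds hold: $\E[X_n]\le\EVaR^p_\Lambda(X_n)\le\max\{x_0,(1-\Lambda(x_0))^{-1/p}\sup_n\|X_n\|_p\}$.
\item Thresholds outside $[a-1,b+1]$ can be discarded. In the sup form, thresholds above $b+1$ are dominated by $b+1$ because $\Lambda$ is decreasing; you should write this step out.
\item The uniform Lipschitz bound then yields $|\EVaR^p_\Lambda(X_n)-\EVaR^p_\Lambda(X)|\le(1-\Lambda(a-1))^{-1/p}\|X_n-X\|_p$.
\end{itemize}
That quantitative local estimate is more than the paper proves. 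In exchange, the paper's argument needs only pointwise continuity of each $\EVaR^p_\alpha$, $\alpha<1$.

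Two small points. First, your fallback for (iii) does not work as written. Unconditional Lyapunov gives only $(1-\alpha)^{-1/p}\|(X-t)_+\|_p\ge(1-\alpha)^{-1/p}\E[(X-t)_+]$. This falls short of the Rockafellar--Uryasev term $(1-\alpha)^{-1}\E[(X-t)_+]$ by the factor $(1-\alpha)^{1-1/p}$, and the comparison fails pointwise in $t$ (take $X-t$ a positive constant). Use H\"older against the ES density instead. Let $A$ be an upper-tail event with $\P(A)=1-\alpha$ and set $Z=\I_A/(1-\alpha)$. Then for every $t$, $\ES_\alpha(X)=\E[XZ]\le t+\E[(X-t)_+Z]\le t+\|(X-t)_+\|_p\|Z\|_q=t+(1-\alpha)^{-1/p}\|(X-t)_+\|_p$. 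Since your main line cites the classical inequality, as the paper does, this does not affect the proof.

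Second, for (v), the same one-line computation gives $\EVaR^p_\Lambda(c)=\sup_x\{c\wedge x\}=c$ for every constant $c$. That is the version the paper's appeal to \eqref{eq-2026-1113} delivers.
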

\begin{proof}
(i) and (ii) are straightforward because $\EVaR_\alpha^p(X)$ is monotone (increasing) in both $\alpha \in [0, 1]$ and $X$, and the supremum of monotone transformations of $\EVaR_\alpha^p(X)$ is also monotone.
(iii) follows from Proposition 2.2 of \cite{ZWXH23} and Proposition 2 of \cite{BHWW25}.
(iv) holds because an increasing transform of a quasi-convex function is quasi-convex, the supremum of a family of quasi-convex functions is quasi-convex, and $\EVaR_\alpha^p$ is quasi-convex for each fixed $\alpha \in [0, 1]$.
(v) is a direct consequence of \eqref{eq-2026-1113}.

To see (vi), let $X \in L^p$ and $c \in \R_+$. Then
\begin{eqnarray*}
\EVaR_\Lambda^p(X + c) &=& \sup_{x \in \R} \left\{\EVaR_{\Lambda(x)}^p(X + c) \wedge x\right\} = \sup_{x \in \R} \left\{\EVaR_{\Lambda(x)}^p(X) + c \wedge x\right\}  \\
&\leq& \sup_{x \in \R} \left\{\EVaR_{\Lambda(x)}^p(X) \wedge x\right\} + c = \EVaR_\Lambda^p(X)  + c.
\end{eqnarray*}

To see (vii), note that $X \leq_{\rm p\hbox{-} icx} Y$ implies $\EVaR_\alpha^p(X) \leq \EVaR_\alpha^p(Y)$ for any $\alpha \in [0,1]$. 
Consequently, $\EVaR_\Lambda^p(X) \leq \EVaR_\Lambda^p(Y)$.

To prove (viii), we note that quasi-concavity is preserved under increasing transformations, and the infimum of a family of quasi-concave in mixtures functions is again quasi-concave in mixtures.
In view of representation \eqref{eq-2026-1114}, it therefore suffices to show that $\EVaR_\alpha^p \vee x$ is quasi-concave in mixtures for each $\alpha \in [0, 1]$.
It is only to show $\EVaR_\alpha^p$ is quasi-concave.
Take independent event $A$ with $X$ and $Y$ such that $\P(A) = \lm$ and let $Z = \I_A X + \I_{A^c} Y$.
Then by the concavity of $x \mapsto x^{1/p}$ for $p \geq 1$ and \eqref{EVaR},
\begin{eqnarray*}
\EVaR_\alpha^p(Z) &=& \inf_{t \in \R} \left\{t + \(\frac{1}{1-\alpha}\)^{1/p} \(\E\[\(Z - t\)_+^p\]\)^{1/p} \right\}  \\
&=& \inf_{t \in \R} \left\{t + \(\frac{1}{1-\alpha}\)^{1/p} \(\lm \E\[\(X - t\)_+^p\] + (1-\lm) \E\[\(Y - t\)_+^p\]\)^{1/p} \right\}  \\
&\geq& \inf_{t \in \R} \left\{t + \(\frac{1}{1-\alpha}\)^{1/p} \(\lm \(\E\[\(X - t\)_+^p\]\)^{1/p} + (1-\lm) \(\E\[\(Y - t\)_+^p\]\)^{1/p}\) \right\}  \\
&\geq& \lm \EVaR_\alpha^p(X) + (1-\lm) \EVaR_\alpha^p(Y).
\end{eqnarray*}

To prove item (ix), first note that $\EVaR_\alpha^p$ is $L^p$-continuous \citep[e.g.,][Proposition 2.5]{ZWXH23} for each $\alpha \in [0, 1)$. 
Take any random variable $X$ and any sequence $(X_n)_{n \in \N}$ in $L^p$ such that $X_n \to X$ in $L^p$ as $n \to \oo$. 
Let $f_n : x \to \EVaR_{\Lambda(x)}^p(X_n) - x$ and $f : x \to \EVaR_{\Lambda(x)}^p(X) - x$. 
By \eqref{eq-2026-1111} and \eqref{eq-2026-1112}, for any $y, z$ with $y < \EVaR_{\Lambda(x)}^p(X) < z$, we have $f(y) > 0 > f(z)$. 
Therefore, because $f_n \to f$ pointwise, we have $f_n(y) > 0 > f_n(z)$ for $n$ large enough. 
This implies $y \leq \EVaR_\Lambda^p(X_n) \leq z$ via \eqref{eq-2026-1112}. 
Since $y, z$ are arbitrarily close to $\EVaR_\Lambda^p(X)$, we know $\EVaR_\Lambda^p(X_n) \to \EVaR_\Lambda^p(X)$.
\end{proof}

A natural question is under which conditions $\Lambda$-EVaR coincides with a classical (non-Lambda) risk measure. 
The following proposition gives a complete answer: convexity, concavity in mixtures, and cash additivity are all equivalent to $\Lambda$ being constant. 
This result parallels those known for $\Lambda$-VaR \citep[][Proposition 2]{XH25} and $\Lambda$-ES \citep[][Proposition 3]{BHWW25}, confirming that the Lambda framework genuinely extends the classical setting only when $\Lambda$ is non-constant.

\begin{proposition}\label{prop35}
For any decreasing function $\Lambda: \R \to [0, 1]$, the following are equivalent.
\begin{enumerate}[{\rm (i)}]
\item The risk measure $\EVaR_\Lambda^p$ is convex.
\item The risk measure $\EVaR_\Lambda^p$ is concave in mixtures.
\item The risk measure $\EVaR_\Lambda^p$ is cash additivity.
\item The function $\Lambda$ is constant on $\R$.
\end{enumerate}
\end{proposition}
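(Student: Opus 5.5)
The direction (iv)$\Rightarrow$(i),(ii),(iii) is immediate. If $\Lambda\equiv\alpha$, then $\EVaR_\Lambda^p(X)=\sup_x\{\EVaR_\alpha^p(X)\wedge x\}=\EVaR_\alpha^p(X)$, which by Proposition \ref{prop-finite} is the ordinary EVaR (equal to $\esssup$ if $\alpha=1$). This functional is coherent, so it is convex and cash additive. For concavity in mixtures, the chain of inequalities in the proof of item (viii) of the preceding proposition already shows $\EVaR_\alpha^p(Z)\ge \lambda\EVaR_\alpha^p(X)+(1-\lambda)\EVaR_\alpha^p(Y)$ for the mixture $Z$. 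That is exactly concavity, not merely quasi-concavity. For the converse I will prove (iii)$\Rightarrow$(iv), then (i)$\Rightarrow$(iii), and finally (ii)$\Rightarrow$(iv) directly.

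\emph{(iii)$\Rightarrow$(iv).} Suppose $\Lambda$ is not constant, so there are $y<z$ with $\alpha:=\Lambda(y)>\beta:=\Lambda(z)$. I will use a scaled uniform risk $X=kU$ with $U\sim\mathrm{U}(0,1)$. The functions $\alpha\mapsto e_\alpha:=\EVaR_\alpha^p(U)$ are continuous and strictly increasing on $[0,1]$, and $\EVaR_\alpha^p(kU+c)=ke_\alpha+c$. Write $f(x)=\EVaR^p_{\Lambda(x)}(X)$. A change of variables gives
\[
\EVaR_\Lambda^p(X+c)=c+\sup_u\{f(u+c)\wedge u\},
\]
so cash additivity forces $\sup_u\{f(u+c)\wedge u\}$ to be independent of $c$. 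I choose $k$ and a shift so that the sandwich point of Proposition \ref{prop32} for $X$ lies to the left of $y$, while for a suitable translate it lies to the right of $z$. Then $\EVaR_\Lambda^p(X)=ke_{\gamma}+\text{shift}$ at some level $\gamma\ge\alpha$, whereas $\EVaR_\Lambda^p(X+c)-c$ is computed at a level $\le\beta$. Since $e_\gamma>e_\beta$ and $k>0$, these two values differ, which contradicts cash additivity. The sandwich criterion \eqref{eq-2026-1112} lets me read off each value exactly. Choosing the constants so that the two thresholds land on the correct sides of $y$ and $z$ is routine bookkeeping.

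\emph{(i)$\Rightarrow$(iii).} Let $\rho=\EVaR_\Lambda^p$. By normalization, $\rho(m)=m$. For $\lambda\in(0,1)$ and $m\in\R$, convexity applied to $X+m=\lambda(X/\lambda)+(1-\lambda)\bigl(m/(1-\lambda)\bigr)$ gives
\[
\rho(X+m)\le \lambda\rho(X/\lambda)+m .
\]
Letting $\lambda\uparrow1$ should give $\rho(X+m)\le\rho(X)+m$ for every real $m$; applying this with $-m$ yields equality. The main obstacle is justifying $\rho(X/\lambda)\to\rho(X)$, since item (ix) of the preceding proposition only covers $\Lambda<1$. I will argue directly. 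For each fixed level, $\EVaR_{\Lambda(x)}^p(X/\lambda)=\EVaR_{\Lambda(x)}^p(X)/\lambda$ converges monotonically in $\lambda$. Using the sandwich \eqref{eq-2026-1112} as in the proof of (ix), this should give the required convergence when $\rho(X)$ is finite. The cases $\Lambda\equiv1$ on a half-line and bounded versus unbounded $X$ are handled by restricting to $L^\infty$, which suffices to refute cash additivity in the counterexample above.

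\emph{(ii)$\Rightarrow$(iv).} Assume again $\Lambda(y)>\Lambda(z)$ with $y<z$, and test concavity on two-point mixtures. Take $F=\delta_{m}$ with $m$ a constant and $G$ the law of $kU+s$ as above. Concavity requires
\[
\rho(\lambda\delta_m+(1-\lambda)G)\ge \lambda m+(1-\lambda)\rho(G).
\]
I will choose $m<y$ and choose $G$ with $\rho(G)$ attained at a level $\le\beta$ just to the right of $z$. The left side is the $\Lambda$-EVaR of a mixture. Because the point mass at $m$ lowers every $\EVaR_\gamma^p$, the mixture's sandwich point slides into a region of level $\le\beta$. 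Its value can then be bounded above via \eqref{eq-2026-1114} by $\EVaR^p_{\beta}$ of the mixture, evaluated explicitly with the formula \eqref{EVaR}. By taking $\lambda$ small and $k$ appropriately, I expect to make the left side strictly smaller than the linear interpolation, which contradicts (ii). If this explicit estimate is too delicate, the fallback is to mimic the corresponding argument in \cite{BHWW25}, Proposition 3, replacing ES by EVaR via \eqref{EVaR}.
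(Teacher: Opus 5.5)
Your arguments for (iv)$\Rightarrow$(i),(ii),(iii) and for (iii)$\Rightarrow$(iv) are sound. The bookkeeping in the latter is indeed trivial: if $\esssup X<y$, taking $x=y$ in \eqref{eq-2026-1111} gives $\EVaR^p_\Lambda(X)\ge\EVaR^p_\alpha(X)$, and if $\E[X]+c>z$ then $\EVaR^p_\Lambda(X+c)\le\EVaR^p_\beta(X)+c$.

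Your treatment of (i) takes a different route from the paper. The paper refutes convexity directly with a two-point $X$, the constant $Y\equiv y$, and their midpoint. You instead show that convexity plus normalization forces cash additivity, and then reuse (iii)$\Rightarrow$(iv). That reduction is correct and more structural. Note that the limit $\rho(X/\lambda)\to\rho(X)$, with $\rho=\EVaR^p_\Lambda$, needs no sandwich argument: once $\Lambda\not\equiv 1$, the map $t\mapsto\rho(tX)$ is a finite convex function on $(0,\infty)$, hence continuous.

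The gap is in (ii)$\Rightarrow$(iv). Since you take $m<y<z<\EVaR^p_\Lambda(G)$, your test pair can never violate concavity, whatever $\lambda,k,s$ are. To see this, let $g:=\EVaR^p_\Lambda(G)$, which is finite, take $m<g$, and set $H=\lambda\delta_m+(1-\lambda)G$ and $h=\lambda m+(1-\lambda)g<g$.
\begin{enumerate}
\item Fix $w\in(h,g)$. Since $g>w$, there is $u>w$ with $\EVaR^p_{\Lambda(u)}(G)>w$. Because $h<u$, this gives $\EVaR^p_{\Lambda(h)}(G)\ge\EVaR^p_{\Lambda(u)}(G)>w$. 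Letting $w\uparrow g$ yields $\EVaR^p_{\Lambda(h)}(G)\ge g$.
\item Concavity in mixtures of the fixed-level functional $\EVaR^p_{\Lambda(h)}$ then gives $\EVaR^p_{\Lambda(h)}(H)\ge\lambda m+(1-\lambda)g=h$.
\item Hence $\EVaR^p_\Lambda(H)\ge\EVaR^p_{\Lambda(h)}(H)\wedge h=h$, which is exactly the concavity inequality.
\end{enumerate}
Your heuristic runs the wrong way. Lowering the value moves the crossing point to the left, where $\Lambda$ is larger, and this pushes the value up rather than down.

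The missing idea is the one the paper takes from \cite{BHWW25}. A bounded non-constant $\Lambda$ is not concave, so there are points $x,y$ and $z=\theta x+(1-\theta)y$ with $\Lambda(z)<\theta\Lambda(x)+(1-\theta)\Lambda(y)$. Perturb $\theta$ to $\gamma$ so that $z<\gamma x+(1-\gamma)y$ still holds. Then mix $X=x\I_A-K\I_{A^c}$ and $Y=y\I_B-K\I_{B^c}$, where $\P(A)=1-\Lambda(x)$ and $\P(B)=1-\Lambda(y)$, with weight $\gamma$.
\begin{enumerate}
\item One has $\EVaR^p_\Lambda(X)=x$ and $\EVaR^p_\Lambda(Y)=y$.
\item The mixture puts mass $\gamma(1-\Lambda(x))+(1-\gamma)(1-\Lambda(y))<1-\Lambda(z)$ above $-K$.
\item Therefore $\EVaR^p_{\Lambda(z)}(Z)\to-\infty$ as $K\to\infty$, and \eqref{eq-2026-1114} gives $\EVaR^p_\Lambda(Z)\le z<\gamma x+(1-\gamma)y$.
\end{enumerate}
A point-mass test can also be made to work, but by the computation above the atom must sit above $\EVaR^p_\Lambda(G)$. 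Your fallback only names the \cite{BHWW25} route without carrying it out, so as written (ii)$\Rightarrow$(iv) is not established.
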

\begin{proof}
It suffices to prove (i) $\Longrightarrow$ (iv), (ii) $\Longrightarrow$ (iv) and (iii) $\Longrightarrow$ (iv), because the implications (iv) $\Longrightarrow$ (i), (iv) $\Longrightarrow$ (ii) and (iv) $\Longrightarrow$ (iii) follow directly from the fact that $\EVaR_\alpha^p$ is both convex, concave in mixtures and cash additivity.
Next, we prove (i) $\Longrightarrow$ (iv) by contradiction.
Suppose that $\Lambda$ is decreasing and not constant on $\R$. 
Then there exist $x > y$ such that $\Lambda(x^-) < \Lambda((x+y)/2) \leq \Lambda(y)$. 
Define a random variable $X$ by $\P(X = y) = \Lambda((x+y)/2) = 1 - \P(X = x)$, and let $Y$ be the constant random variable with $\P(Y = y) = 1$.
Set $Z = (X + Y)/2$. 
For any $t < y$, we have
\begin{equation*}
t + \(\frac{1}{1-\Lambda((x+y)/2)}\)^{1/p} \(\Lambda\(\frac{x+y}{2}\) (y-t)^p + \(1-\Lambda\(\frac{x+y}{2}\)\) \(\frac{x+y}{2}-t\)^p\)^{1/p} \geq \frac{x+y}{2}.
\end{equation*}
Consequently, $\EVaR_{\Lambda((x+y)/2)}^p(Z) = (x+y)/2$ and $\EVaR_\Lambda^p(Z) = (x+y)/2$ by \eqref{eq-2026-1113}. 
Moreover, $\P(Y = y) = 1$ implies $\EVaR_\Lambda^p(Y) = y$.
To reach a contradiction, we must show that $\EVaR_\Lambda^p(X) < x$.
By \eqref{eq-2026-1112}, it is enough to prove $\EVaR_{\Lambda(x^-)}^p(X) < x$.
To see this, take any $u \in (x, y)$. Then
\begin{equation*}
\EVaR_{\Lambda(x^-)}^p(X) \leq u + \(\frac{1 - \Lambda\((x+y)/2\)}{1-\Lambda(x^-)}\)^{1/p} \(x - u\)  < u + \(x - u\)  = x,
\end{equation*} 
as required.

Third, we prove (ii) $\Longrightarrow$ (iv) by contradiction. 
Suppose that $\Lambda$ is not constant on $\R$. Since $\Lambda$ is bounded, it cannot be concave.
Consequently, there exist distinct points $x, y, z \in \R$ and $\th \in (0, 1)$ such that 
$$z = \th x + (1 - \th)y ~~~ {\rm and} ~~~ \Lambda(z) < \th \Lambda(x) + (1 - \th)\Lambda(y).$$ 
By the continuity of linear functions, we can find $\gamma \in (0, 1)$ arbitrarily close to $\th$ such that 
$$z < \gamma x + (1 - \gamma)y ~~~ {\rm and} ~~~ \Lambda(z) < \gamma \Lambda(x) + (1 - \gamma)\Lambda(y).$$ 
Take independent events $A, B, C \in {\cal F}$ with $\P(A) = 1 - \Lambda(x), \P(B) = 1 - \Lambda(y)$, and $\P(C) = \gamma$. 
Choose a constant $K > \max\{-x, -y\}$ (to be fixed later) and define
\begin{equation*}
X = x \I_A - K \I_{A^c}, ~~~ Y = y \I_B - K \I_{B^c}, ~~~ {\rm and} ~~~ Z = \I_C X + \I_{C^c} Y.
\end{equation*}
For any $t < -K$, we have
\begin{equation*}
t + \(\frac{1}{1-\Lambda(x)}\)^{1/p} \(\Lambda(x) (-K-t)^p + \(1-\Lambda(x)\) \(x-t\)^p\)^{1/p} \geq x.
\end{equation*}
Therefore, $\EVaR_{\Lambda(x)}^p(X) = x$ and $\EVaR_\Lambda^p(X) = x$ by \eqref{eq-2026-1113}. 
Similarly, $\EVaR_\Lambda^p(Y) = y$.
To reach a contradiction, we show that for sufficiently large $K$ we have $\EVaR_\Lambda^p(Z) \leq z$.
Observe that
\begin{eqnarray*}
\EVaR_{\Lambda(z)}^p(Z) &\leq& -K + \(\frac{1}{1-\Lambda(z)}\)^{1/p} \(\gamma (1-\Lambda(x)) (x + K)^p + (1 - \gamma) \(1-\Lambda(y)\) (y + K)^p\)^{1/p}    \\
&\leq& -K + \(\frac{1}{1-\Lambda(z)}\)^{1/p} \(\gamma (1-\Lambda(x)) + (1 - \gamma) (1-\Lambda(y))\)^{1/p} (x \vee y + K),
\end{eqnarray*}
which tends to $-\oo$ as $K \to \oo$ because $\Lambda(z) < \gamma \Lambda(x) + (1 - \gamma)\Lambda(y)$.
In particular, for large enough $K$, we have $\EVaR_{\Lambda(z)}^p(Z) \leq z$.
Using \eqref{eq-2026-1114}, we conclude that $\EVaR_\Lambda^p(Z) \leq z$.

Finally, we prove (iii) $\Longrightarrow$ (iv) by contradiction. 
Suppose that $\Lambda$ is decreasing and not constant on $\R$. 
Then there exist $x > y$ such that $\Lambda(x^-) < \Lambda((x+y)/2) \leq \Lambda(y)$. 
Define a random variable $X$ by $\P(X = y) = \Lambda((x+y)/2) = 1 - \P(X = (x+y)/2)$.
Then $\EVaR_{\Lambda((x+y)/2)}^p(X) = (x+y)/2$ and $\EVaR_\Lambda^p(X) = (x+y)/2$ by \eqref{eq-2026-1113}.
Let $m = (x - y)/2 > 0$. 
Note that $\EVaR_{\Lambda(x^-)}^p(X + m) < \EVaR_{\Lambda((x+y)/2)}^p(X+m) = \EVaR_{\Lambda((x+y)/2)}^p(X) + m = x$.
Then $\EVaR_\Lambda^p(X + m) < x = (x+y)/2 + m$ by \eqref{eq-2026-1112}.
Hence, this contradicts the cash additivity of $\EVaR_\Lambda^p$.
\end{proof}

Another classical property often examined in risk measure theory is positive homogeneity. 
For $\Lambda$-EVaR, positive homogeneity imposes a very restrictive structure on $\Lambda$: 
it must be piecewise constant on the positive and negative half-lines.

\begin{proposition}
Let $\Lambda: \R \to [0, 1]$ be decreasing. 
If $\EVaR_\Lambda^p$ is positively homogeneous for all $X \in L^p$, then $\Lambda$ is constant on intervals $(0, \oo)$ and $(-\oo, 0)$, respectively, that is, there exist $1 \geq \alpha_1 \geq \alpha_2 \geq \alpha_3 \geq 0$ such that
\begin{equation}\label{piece}
\Lambda(x) = \alpha_1 \I_{(-\oo, 0)}(x) + \alpha_2 \I_{\{0\}}(x) + \alpha_3 \I_{(0, \oo)}(x).
\end{equation}
\end{proposition}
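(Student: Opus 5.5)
We argue by contradiction, separately on each half-line, mirroring the cash-additivity part of Proposition \ref{prop35}. By symmetry of the argument it is enough to show that $\Lambda$ is constant on $(0,\oo)$; the case $(-\oo,0)$ is handled the same way with negative scalings. Once both half-lines are settled, the constants $\alpha_1\ge\alpha_2\ge\alpha_3$ and the representation \eqref{piece} follow from $\Lambda$ being decreasing, with $\alpha_2=\Lambda(0)$ unconstrained except for monotonicity.

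\textbf{Step 1: two jump points.} Suppose $\Lambda$ is not constant on $(0,\oo)$. Since $\Lambda$ is decreasing, we can pick $0<y<x$ such that $\Lambda(x^-)<\Lambda(y)$. Shrinking the interval, we may moreover arrange a point $w\in(y,x)$ with
\[
\Lambda(x^-)<\Lambda(w)\le\Lambda(y).
\]
Concretely, take $w=(x+y)/2$ after possibly replacing $y$, as in the proof of Proposition \ref{prop35}.

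\textbf{Step 2: a test variable.} Build a two-point random variable $X$ with $\EVaR^p_{\Lambda(w)}(X)=w$. Choose $\P(X=a)=\Lambda(w)=1-\P(X=w)$ with $a<w$. The computation used in Proposition \ref{prop35} shows that $\EVaR^p_{\Lambda(w)}(X)=w$, because the optimal $t$ is the lower atom and the tail mass equals $1-\Lambda(w)$. Then \eqref{eq-2026-1113} gives $\EVaR^p_\Lambda(X)=w$. To keep the scaled variable on the positive half-line we take $a>0$.

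\textbf{Step 3: scale.} Let $\lambda=x/w>1$. Positive homogeneity of $\EVaR^p_\alpha$ gives
\[
\EVaR^p_{\Lambda(w)}(\lambda X)=\lambda w=x.
\]
Since $\Lambda(x^-)<\Lambda(w)$, the strict inequality $\EVaR^p_{\Lambda(x^-)}(\lambda X)<x$ should follow. Test $t=\lambda a$ in \eqref{EVaR}: this gives
\[
\lambda a+\Big(\tfrac{1-\Lambda(w)}{1-\Lambda(x^-)}\Big)^{1/p}\,(x-\lambda a)<x .
\]
By \eqref{eq-2026-1112}, the value $x$ cannot equal $\EVaR^p_\Lambda(\lambda X)$. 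On the other hand, positive homogeneity would force $\EVaR^p_\Lambda(\lambda X)=\lambda w=x$, which is a contradiction.

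\textbf{Negative half-line and the main obstacle.} On $(-\oo,0)$ the same argument applies with $x<y<0$. The difference is that scaling by $\lambda\in(0,1)$ moves $w$ toward $0$, and the roles of left and right limits swap. One uses instead the left inequality of \eqref{eq-2026-1112}: show $\EVaR^p_{\Lambda(z^+)}(\lambda X)>z$ fails appropriately, or scale from the point with the smaller $\Lambda$ toward the one with the larger. I expect the delicate point to be the bookkeeping of one-sided limits $\Lambda(x^\pm)$ when $\Lambda$ jumps exactly at the scaled point. To handle this, I would first choose $x$ to be a continuity point of $\Lambda$ (possible since monotone functions have countably many jumps) whenever the strict gap can still be preserved.
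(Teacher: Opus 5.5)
Your argument is correct in substance but takes a different route from the paper. The paper substitutes $x\mapsto\lambda x$ in \eqref{eq-2026-1111}, which turns positive homogeneity into the identity $\EVaR^p_{\Lambda(\lambda\,\cdot)}=\EVaR^p_\Lambda$ for every $\lambda>0$. It then simply asserts that this forces $\Lambda(\lambda_1x)=\Lambda(\lambda_2x)$ for $x>0$. That identifiability step is not proved there, and as a general principle it is false: by \eqref{eq-2026-1112}, $\EVaR^p_\Lambda$ depends on $\Lambda$ only through the one-sided limits $\Lambda(x^\pm)$. You instead build an explicit witness, in the style of the cash-additivity part of Proposition \ref{prop35}. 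Your two-point $X$ has $\EVaR^p_\Lambda(X)=w$ by \eqref{eq-2026-1113}. Rescaling by $\lambda=x/w$ moves it to a point $x$ with $\Lambda(x^-)<\Lambda(w)$. There the test point $t=\lambda a$ in \eqref{EVaR} violates the inequality $x\le\EVaR^p_{\Lambda(x^-)}(\lambda X)$ required by \eqref{eq-2026-1112}. The paper's route is shorter. Yours is self-contained and supplies exactly the separation argument hidden in the paper's ``consequently''.

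Two points need repair. First, if $\Lambda(w)=1$, your $X$ is the constant $a$, so $\EVaR^p_{\Lambda(w)}(X)=a\neq w$ and Step 2 fails. You cannot avoid this by choosing $w$ well. For $\Lambda=\I_{(-\oo,c]}$ with $c>0$, every pair $w<x$ in $(0,\oo)$ with $\Lambda(x^-)<\Lambda(w)$ has $\Lambda(w)=1$. The fix is to decouple the atom mass from $\Lambda(w)$:
\begin{itemize}
\item Take $\beta$ with $\Lambda(x^-)<\beta<\Lambda(w)$, so that $\beta<1$.
\item Set $\P(X=a)=\beta=1-\P(X=w)$.
\item Then $w=\EVaR^p_\beta(X)\le\EVaR^p_{\Lambda(w)}(X)\le\esssup(X)=w$.
\item The same test point gives $\lambda a+\big((1-\beta)/(1-\Lambda(x^-))\big)^{1/p}(x-\lambda a)<x$.
\end{itemize}

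Second, Step 1 and your closing paragraph are more complicated than needed, and the closing paragraph is partly mistaken. If $0<u<v$ with $\Lambda(u)>\Lambda(v)$, take $w=u$ and any $x>v$. Then $\Lambda(x^-)\le\Lambda(v)<\Lambda(w)$, with no midpoint and no condition $a>0$.

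On $(-\oo,0)$ the proof is verbatim the same. Take $u<v<0$, $w=u$, $x\in(v,0)$ and $\lambda=x/w\in(0,1)$. The scalar $\lambda$ is still positive, $\lambda a<\lambda w=x$, and you again contradict $x\le\EVaR^p_{\Lambda(x^-)}(\lambda X)$. Nothing swaps, and no restriction to continuity points of $\Lambda$ is needed.

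In fact, the alternative you float cannot work with such a two-point variable. Contradicting $\EVaR^p_{\Lambda(x^+)}(\lambda X)\le x$ at the target point $\lambda w$ is impossible, because $\EVaR^p_\beta(\lambda X)\le\esssup(\lambda X)=\lambda w$ for every $\beta$.
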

\begin{proof}
($\Longleftarrow$): Suppose $\Lambda$ has the form given in \eqref{piece}. Then by \eqref{eq-2026-1111}
\begin{eqnarray*}
\EVaR_\Lambda^p(X) &=& \max\left\{\sup_{x < 0} \left\{\EVaR_{\alpha_1}^p(X) \wedge x\right\}, \EVaR_{\alpha_2}^p(X) \wedge 0, \sup_{x > 0} \left\{\EVaR_{\alpha_3}^p(X) \wedge x\right\}  \right\}  \\
&=& \max\left\{\EVaR_{\alpha_1}^p(X) \wedge 0, \EVaR_{\alpha_2}^p(X) \wedge 0, \EVaR_{\alpha_3}^p(X) \right\}.
\end{eqnarray*}
From this representation it is clear that $\EVaR_\Lambda^p(\lm X) = \lm \EVaR_\Lambda^p(X)$ for $\lm \in (0, \oo)$.

($\Longrightarrow$): Assume that $\EVaR_\Lambda^p$ is positively homogeneous for all $X \in L^p$. 
Then for every $\lm \in (0, \oo)$
\begin{equation*}
\EVaR_\Lambda^p(\lm X) = \sup_{x \in \R} \left\{\lm \EVaR_{\Lambda(x)}^p(X) \wedge x\right\} = \lm \sup_{x \in \R} \left\{\EVaR_{\Lambda(x)}^p(X) \wedge x\right\} = \lm \EVaR_\Lambda^p(X).
\end{equation*}
Using the identity
$$\sup_{x \in \R} \left\{\lm \EVaR_{\Lambda(x)}^p(X) \wedge x\right\} = \lm \sup_{x \in \R} \left\{\EVaR_{\Lambda(\lm x)}^p(X) \wedge x\right\},$$
we obtain
\begin{equation}\label{eq-2026-11116}
\sup_{x \in \R} \left\{\EVaR_{\Lambda(\lm x)}^p(X) \wedge x\right\} = \sup_{x \in \R} \left\{\EVaR_{\Lambda(x)}^p(X) \wedge x\right\}.
\end{equation}
Now take $x > 0$ and let $\lm_1, \lm_2 > 0$. 
Setting $\lm = \lm_1 / \lm_2$ in \eqref{eq-2026-11116} yields
\begin{equation*}
\sup_{x \in \R} \left\{\EVaR_{\Lambda(\lm_1 x)}^p(X) \wedge x\right\} = \sup_{x \in \R} \left\{\EVaR_{\Lambda(\lm_2 x)}^p(X) \wedge x\right\}.
\end{equation*}
Consequently, $\Lambda(\lm_1 x) = \Lambda(\lm_2 x)$ for all $x > 0$.
Hence $\Lambda$ is constant on the positive half-line; denote this constant by $\alpha_3$.
A symmetric argument shows that $\Lambda$ is also constant on the negative half-ine; denote that constant by $\alpha_1$.
Finally, set $\alpha_2 := \Lambda(0)$. 
Because $\Lambda$ is decreasing, we have $1 \geq \alpha_1 \geq \alpha_2 \geq \alpha_3 \geq 0$.
Thus, $\Lambda$ necessarily has the piecewise-constant form given in \eqref{piece}, which completes the proof.
\end{proof}

The results of this section demonstrate that $\Lambda$-EVaR shares the core features of existing Lambda risk measures while being based on the more general R\'{e}nyi entropic construction.

\section{Representations and worst-case analysis}\label{sec4}

This section is devoted to alternative representations of $\Lambda$-EVaR and its behaviour under model uncertainty. 
We first present a dual representation that reveals the underlying extremal probability measures, then derive an extended Rockafellar-Uryasev formula that facilitates numerical computation. 
Finally, we examine worst-case values of $\Lambda$-EVaR over two classical uncertainty sets: those described by a Wasserstein distance and those specified by given mean and variance.

\subsection{Dual representation}

Dual representations are fundamental in convex risk measure theory because they express the risk measure as a supremum over a set of scenarios (probability measures). 
For $\Lambda$-EVaR, the dual representation involves probability measures whose R\'{e}nyi entropy is controlled by the function $\Lambda$. 
Let $\mathcal{M}_1(\Omega)$ denote the set of all probability measures on $(\Omega,\mathcal{F})$.

\begin{theorem}\label{thm41}
For any decreasing function $\Lambda: \R \to [0, 1]$, the risk measure $\EVaR_\Lambda^p$ adopts the following representation:
\begin{equation}\label{eq-2026-41}
\EVaR_\Lambda^p(X) = \sup_{\Q \in {\cal M}^q_\P} R\(\E_\Q[X], \Q\) ~~~ X \in L^p,
\end{equation}
where ${\cal M}^q_\P = \left\{\Q \in {\cal M}_1(\Omega): \Q \ll \P,\ \d \Q / \d \P \in L^q \right\}$ and for $(t, \Q) \in \R \times {\cal M}^q_\P$,
\begin{equation}\label{eq-2026-42}
R\(t, \Q\) = \sup_{x \in \R} \left\{t \wedge x: \Lambda(x) \geq 1 - \exp\(-H_q\(\frac{\d \Q}{\d \P}\)\) \right\},
\end{equation}
where $H_q$ is given in \eqref{eq-1103}. Moreover, the following statements hold.
\begin{enumerate}[{\rm (i)}]
\item The supremum in \eqref{eq-2026-42} can be changed to a maximum if $\Lambda$ is left-continuous.
\item $(t, \Q) \to R(t, \Q)$ is upper semicontinuous, quasi-concave, and increasing in $t$.
\item $\inf_{t \in \R} R(t, \Q) = \inf_{t \in \R} R(t, \Q')$ for all $\Q, \Q' \in {\cal M}^q_\P$.
\item $R(t_1, \Q) - R(t_2, \Q) \leq t_1 - t_2$ for all $t_1 \geq t_2$ and $\Q \in {\cal M}^q_\P$.
\end{enumerate}
\end{theorem}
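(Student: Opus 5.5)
The plan is to derive \eqref{eq-2026-41} by interchanging the two suprema, one over thresholds $x$ in \eqref{eq-2026-1111} and one over measures $\Q$ in the dual definition of $\EVaR^p_\alpha$. I read $a\wedge b$ as $\min\{a,b\}$ throughout. This is the reading under which \eqref{eq-2026-1111} is finite, as in Proposition \ref{prop-finite}. I take $\sup\emptyset=-\oo$.

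For $\alpha\in[0,1)$ the definition gives
$$\EVaR^p_\alpha(X)=\sup\{\E_\Q[X]:\ 1-\exp(-H_q(\d\Q/\d\P))\le \alpha\}.$$
Finite $H_q$ forces $\Q\ll\P$ and $\d\Q/\d\P\in L^q$, so the supremum runs over $\Q\in{\cal M}^q_\P$. For $\alpha=1$ the constraint holds for every $\Q\in{\cal M}^q_\P$. I will check that
$$\sup_{\Q\in{\cal M}^q_\P}\E_\Q[X]=\esssup(X)=\EVaR^p_1(X).$$
To do this I take densities $\I_{\{X>c\}}/\P(X>c)$, which lie in $L^\oo\subset L^q$, and let $c\uparrow\esssup(X)$. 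Hence for every $\alpha\in[0,1]$, $\EVaR^p_\alpha(X)$ is the supremum of $\E_\Q[X]$ over $\Q\in{\cal M}^q_\P$ with $1-e^{-H_q}\le\alpha$.

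Next I use the elementary identity $(\sup_i a_i)\wedge x=\sup_i(a_i\wedge x)$. This gives
$$\EVaR^p_\Lambda(X)=\sup_{x}\ \sup_{\Q:\,\Lambda(x)\ge 1-e^{-H_q(\Q)}}\big(\E_\Q[X]\wedge x\big).$$
Exchanging the order of the two suprema yields $\sup_{\Q\in{\cal M}^q_\P}R(\E_\Q[X],\Q)$, which proves \eqref{eq-2026-41}.

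For the listed properties I write $R(t,\Q)=t\wedge x_\Q$. Here
$$x_\Q:=\sup S_\Q,\qquad S_\Q:=\{x:\Lambda(x)\ge c_\Q\},\qquad c_\Q:=1-e^{-H_q(\d\Q/\d\P)}.$$
Since $\Lambda$ is decreasing, $S_\Q$ is empty, or $\R$, or an interval $(-\oo,x_\Q)$ or $(-\oo,x_\Q]$. The supremum of $t\wedge x$ over $S_\Q$ is then $t\wedge x_\Q$, with the value $-\oo$ when $S_\Q=\emptyset$.

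\begin{enumerate}[(i)]
\item If $\Lambda$ is left-continuous and $x_\Q$ is finite, then $\Lambda(x_\Q)=\Lambda(x_\Q^-)\ge c_\Q$, so $x_\Q\in S_\Q$. The maximum is then attained at $x=x_\Q$ when $t\ge x_\Q$, and at $x=t$ otherwise. The degenerate cases $S_\Q\in\{\emptyset,\R\}$ are trivial.
\item Monotonicity in $t$ is immediate from $R(t,\Q)=t\wedge x_\Q$.
\item If $S_\Q\neq\emptyset$ then $S_\Q$ is unbounded below, so $\inf_t(t\wedge x_\Q)=-\oo$. If $S_\Q=\emptyset$ then $R(\cdot,\Q)\equiv-\oo$. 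Either way the infimum equals $-\oo$ for every $\Q$.
\item The map $t\mapsto t\wedge x_\Q$ is increasing and $1$-Lipschitz, which gives the stated inequality.
\end{enumerate}

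The main obstacle is quasi-concavity and upper semicontinuity in (ii). The approach is to show that every superlevel set $\{(t,\Q):R(t,\Q)\ge r\}$ is convex and closed. The condition $R(t,\Q)\ge r$ is equivalent to
$$t\ge r\quad\text{and}\quad \Lambda(y)\ge c_\Q\ \text{ for all } y<r,$$
that is, $c_\Q\le\Lambda(r^-)$. Since $c\mapsto 1-e^{-c}$ is increasing, this amounts to $H_q(\d\Q/\d\P)\le h_r$ for a threshold $h_r$ determined by $\Lambda(r^-)$. So the superlevel set is $[r,\oo)$ times a sublevel set of $H_q$.

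Convexity then follows from quasi-convexity of $H_q$ in the density. For $q\in(1,\oo)$, $H_q$ is an increasing transform of the convex functional $\E[Z^q]$; for $q=\oo$ it is the logarithm of a norm; and for $q=1$ it is a convex entropy. Closedness follows from lower semicontinuity of $H_q$, for instance in the $L^q$ (or weak) topology on densities, by the lower semicontinuity of norms and of relative entropy.

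I expect the delicate points to be the left limit $\Lambda(r^-)$ in the characterisation of the superlevel set, which is exactly what makes the set closed without assuming continuity of $\Lambda$, and fixing a topology on ${\cal M}^q_\P$ in which $H_q$ is lower semicontinuous.
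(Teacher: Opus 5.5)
Your proposal is correct and follows the paper's route for the representation. Like the paper, you write $\EVaR^p_{\Lambda(x)}$ in its dual form, move $\wedge x$ inside the supremum over $\Q$, and interchange the suprema over $x$ and $\Q$. For (i)--(iv) the paper only asserts $A=(-\oo,a]$ and cites Theorem 3 of \cite{BHWW25}. You instead supply the details directly through $R(t,\Q)=t\wedge x_\Q$ and the superlevel sets $[r,\oo)\times\{H_q\le\log\frac{1}{1-\Lambda(r^-)}\}$. You also handle the level $\alpha=1$ and the degenerate cases $S_\Q\in\{\emptyset,\R\}$, which the paper's sketch glosses over. The one remaining loose end, shared with the paper's statement, is that (iv) with $x_\Q=-\oo$ needs a convention for $-\oo-(-\oo)$.
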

\begin{proof}
Define
\begin{equation*}
{\cal M}^q_{\Lambda(x)} = \left\{\Q \in {\cal M}_1(\Omega): \frac{\d \Q}{\d \P} \in L^q, H_q\(\frac{\d \Q}{\d \P}\) \leq \log \frac{1}{1-\Lambda(x)} \right\} ~~~ x \in \R.
\end{equation*}
For any $X \in L^p$ and $x \in \R$, by Theorem 12 of \cite{PS20}, we have
\begin{eqnarray*}
\EVaR_\Lambda^p(X) &=& \sup_{x \in \R} \left\{\EVaR_{\Lambda(x)}^p(X) \wedge x\right\}   \\
&=& \sup_{x \in \R} \left\{\(\sup_{\Q \in {\cal M}^q_{\Lambda(x)}} \E_\Q[X]\) \wedge x\right\}   \\
&=& \sup_{x \in \R} \sup_{\Q \in {\cal M}^q_{\Lambda(x)}} \left\{\E_\Q[X] \wedge x\right\}   \\
&=& \sup_{\Q \in {\cal M}^q_\P} \sup_{x \in \R} \left\{\E_\Q[X] \wedge x: \Lambda(x) \geq 1 - \exp\(-H_q\(\frac{\d \Q}{\d \P}\)\)\right\}.
\end{eqnarray*}
Define
\begin{equation*}
A = \left\{x \in \R: \Lambda(x) \geq 1 - \exp\(-H_q\(\frac{\d \Q}{\d \P}\)\) \right\},
\end{equation*}
and $a := \sup A \in \R \cup \{-\oo, \oo\}$.
By the decreasing and left-continuous of $\Lambda$, we get $A = (-\oo, a]$.
Thus, $R(t, \Q) = \sup_{x \leq a} t \wedge x = \max_{x \leq a} t \wedge x$. This implies (i).
The proofs of (ii)-(iv) follow arguments analogous to those in Theorem 3 of \cite{BHWW25}.
\end{proof}

The function $R(t, \Q)$ can be interpreted as the maximal threshold $x$ for which the confidence level $\Lambda(x)$ is at least $1-\exp(-H_q(\d \Q/\d \P))$. 
Theorem \ref{thm41} thus expresses $\Lambda$-EVaR as a two-layer supremum: over probability measures $\Q$ with finite R\'{e}nyi entropy, and over thresholds $x$ compatible with that entropy. 
Properties (i)-(iv) reflect the regularity of this representatio.

\subsection{Extended Rockafellar-Uryasev formula}

The Rockafellar-Uryasev formula provides a powerful tool for computing and optimizing classical risk measures ES. 
For $\Lambda$-EVaR we obtain an analogous formula that involves an additional minimization over the threshold variable $x$.

Define $T_\Lambda: \overline{\R} \times \R \times L^p \to \overline{\R}$ by
\begin{equation}\label{T_Lambda}
T_\Lambda: (t, x, X) \mapsto \(t + \(\frac{1}{1-\Lambda(x)}\)^{1/p} \left\| (X-t)_+ \right\|_p\) \vee x
\end{equation}

\begin{theorem}
Let $\Lambda : \R \to [0, 1]$ be a right-continuous decreasing function and $\EVaR_\Lambda^p(X)$ is finite on $L^p$ and $T_\Lambda$ be given in \eqref{T_Lambda}. Then
\begin{equation}\label{EVaR_US}
\EVaR_\Lambda^p(X) = \min_{(t, x) \in \overline{\R} \times \R} T_\Lambda(t, x, X) = \min_{(t, x) \in \overline{\R} \times \R} \left\{ \(t + \(\frac{1}{1-\Lambda(x)}\)^{1/p} \left\| (X-t)_+ \right\|_p\) \vee x\right\} ~~~ X \in L^p,
\end{equation}
where the minima are obtained at $x^* = \EVaR_\Lambda^p(X)$ and
\begin{equation*}
t^* \left\{\begin{array}{ll}
        \in \[\VaR_{\Lambda(x^*)}^p(X), \VaR_{\Lambda(x^*)}^{p+}(X)\], &  {\rm if}\ \Lambda(x^*) \in [0, 1), \\
        = \VaR_1^{p+}(X), & {\rm if}\ \Lambda(x^*) = 1. 
\end{array} \right.
\end{equation*}
Moreover,
\begin{enumerate}[{\rm (i)}]
\item $T_\Lambda(t, x, X)$ is jointly convex in $(t, X) \in \overline{\R} \times L^p$ for all $x \in \R$;
\item $T_\Lambda(t, x, X)$ is convex in $x \in \R$ for all $(t, X) \in \overline{\R} \times L^p$ if and only if the function $x \mapsto \(1/(1 - \Lambda(x))\)^{1/p}$ is convex;
\item the following statements are equivalent:
\begin{enumerate}[{\rm (a)}]
\item $T_\Lambda(t, x, X)$  is jointly convex in $(t, x) \in \overline{\R} \times \R$ for all $X \in L^p$;
\item $T_\Lambda(t, x, X)$  is jointly quasi-convex in $(t, x) \in \overline{\R} \times \R$ for all $X \in L^p$;
\item $T_\Lambda(t, x, X)$  is jointly convex in $(t, x, X) \in \overline{\R} \times \R \times L^p$;
\item $T_\Lambda(t, x, X)$  is jointly quasi-convex in $(t, x, X) \in \overline{\R} \times \R \times L^p$;
\item $\Lambda$ is constant on $\R$.
\end{enumerate}
\end{enumerate}
\end{theorem}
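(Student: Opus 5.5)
The main formula \eqref{EVaR_US} follows by combining Proposition \ref{prop-inf} with the primal formula \eqref{EVaR}. Interchanging the two infima gives
\begin{equation*}
\EVaR_\Lambda^p(X)=\inf_{x\in\R}\Big\{\inf_{t\in\overline{\R}}\Big(t+\Big(\tfrac{1}{1-\Lambda(x)}\Big)^{1/p}\|(X-t)_+\|_p\Big)\vee x\Big\}=\inf_{(t,x)}T_\Lambda(t,x,X).
\end{equation*}
Here I use that $s\mapsto s\vee x$ is continuous and increasing, so $\inf_t(g(t)\vee x)=(\inf_t g(t))\vee x$. For attainment, $\Lambda$ is right-continuous and $\EVaR_\Lambda^p(X)$ is finite, so Proposition \ref{prop-inf} gives that the outer infimum is a minimum. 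I would check directly that it is attained at $x^*=\EVaR_\Lambda^p(X)$. Proposition \ref{prop32} gives $\EVaR^p_{\Lambda(x^{*+})}(X)\le x^*$, and right-continuity gives $\Lambda(x^{*+})=\Lambda(x^*)$. Hence $\EVaR^p_{\Lambda(x^*)}(X)\vee x^*=x^*$. The inner minimum over $t$ is then attained on the argmin set \eqref{EVaR_minimizer}, which gives the stated $t^*$ (with the convention $\VaR_1^{p+}$ when $\Lambda(x^*)=1$).

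For (i), fix $x$. The map $(t,X)\mapsto t+c\|(X-t)_+\|_p$ with $c\ge0$ is jointly convex, because $(t,X)\mapsto(X-t)_+$ is convex and pointwise increasing, and $\|\cdot\|_p$ is convex and monotone. Taking the maximum with the constant $x$ preserves convexity.

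For (ii), fix $(t,X)$ and write $a=\|(X-t)_+\|_p\ge0$, $\phi(x)=(1/(1-\Lambda(x)))^{1/p}$. Then $T_\Lambda=(t+a\phi(x))\vee x$. If $\phi$ is convex, this is a maximum of two convex functions, hence convex. For the converse, I would take $X$ and $t$ with $a>0$ and send $t\to-\infty$ by shifting $X$ upward. Concretely, I would choose $X=m$ constant with $t=0$ and $m$ large, so that $t+a\phi(x)=m\phi(x)$ dominates $x$ on any bounded interval. Convexity of $m\phi(x)\vee x$ on the interval then forces convexity of $\phi$ there. Some care is needed where $\phi=\infty$ (at $\Lambda=1$), so I would interpret convexity in the extended-valued sense.

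For (iii), the implications (e)$\Rightarrow$(c)$\Rightarrow$(a)$\Rightarrow$(b) and (c)$\Rightarrow$(d)$\Rightarrow$(b) hold trivially or by (i), since a constant $\Lambda$ makes $T_\Lambda$ a maximum of a jointly convex function and the linear function $x$. The key step, and the main obstacle, is (b)$\Rightarrow$(e). I would mimic the construction in the proof of Proposition \ref{prop35}. Suppose $\Lambda$ is not constant, and pick $x_1<x_2$ with $\Lambda(x_1)>\Lambda(x_2)$. Take $X$ to be a two-point variable, or a large constant $m$. Then compare $T_\Lambda$ at $(t_1,x_1)$ and $(t_2,x_2)$ with its value at the midpoint. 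With $X=m$ and $t=0$, we have $T_\Lambda(0,x,m)=m\phi(x)\vee x$. Joint quasi-convexity in $(t,x)$ together with the freedom in $t$ should then give a contradiction. The idea is to choose $t_1>t_2$ so that the two endpoint values are equal, while at the midpoint $t$ is smaller but $\phi$ is still large because $\Lambda$ is decreasing with a jump or slope there. I would try first the sublevel set $\{(t,x):T_\Lambda\le c\}=\{x\le c,\ t+\phi(x)\|(X-t)_+\|_p\le c\}$. For $X$ two-point, this set is convex in $(t,x)$ only if $\phi$ is constant on $(-\infty,c]$. I would verify this non-convexity explicitly by choosing $c$ between $x_1$ and $x_2$ suitably and letting $c$ vary over $\R$.
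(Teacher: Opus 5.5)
Most of your plan is sound, and parts of it are cleaner than the paper. This includes the interchange of infima, the direct check that $x^*=\EVaR^p_\Lambda(X)$ attains the minimum (via Proposition \ref{prop32} and $\Lambda(x^{*+})=\Lambda(x^*)$), part (i), and the easy implications in (iii). Your ``only if'' argument in (ii) is cleaner than the paper's: with $X\equiv m$, $t=0$ and $\phi\ge 1$, one has $T_\Lambda(0,x,m)=m\phi(x)$ for $x\le m$, so convexity passes to $\phi$ on $(-\oo,m]$.

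The genuine gap is (b)$\Rightarrow$(e), the only nontrivial implication in (iii), which you leave as a plan. Your sketched routes do not deliver it, for three reasons.
\begin{itemize}
\item The slice $t=0$, $X\equiv m$ cannot give a contradiction. The map $x\mapsto m\phi(x)\vee x$ is the maximum of a decreasing and an increasing function, so its sublevel sets are always intervals and it is always quasi-convex.
\item With $X$ constant, pairing a point with $t\ge m$ against a point with $t<m$ at a common level does not work in general. For $t<m$ one has $t+\phi(x)(m-t)\ge m$, with equality only if $\Lambda(x)=0$.
\item Your claim that, for two-point $X$, the sublevel set is convex only if $\phi$ is constant on $(-\oo,c]$ is exactly what has to be proved, and you give no argument for it. It also cannot hold for every $c$: for $c<\EVaR^p_\Lambda(X)$ the set is empty.
\end{itemize}

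The missing idea is to calibrate the two-point law so that one point lies exactly on a level set. You then pair it with a point where $t\ge\esssup X$; there $(X-t)_+=0$, so a large $\phi$ at that point costs nothing.

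Concretely, as in the paper, suppose $\Lambda$ is not constant.
\begin{itemize}
\item Pick $u<x$ with $\Lambda(u)>\Lambda(x)$ and set $y=2u-x$.
\item Take $a<t$ with $t\ge x$, and let $\P(X=a)=\Lambda(x)=1-\P(X=t)$.
\item Then $\|(X-a)_+\|_p=(1-\Lambda(x))^{1/p}(t-a)$, which gives $T_\Lambda(a,x,X)=t$.
\item Also $T_\Lambda(t,y,X)=t\vee y=t$, using $0\cdot\oo=0$ if $\Lambda(y)=1$.
\end{itemize}
At the midpoint $((a+t)/2,u)$ of these two points,
\begin{equation*}
T_\Lambda\Big(\frac{a+t}{2},u,X\Big)\ \ge\ \frac{a+t}{2}+\Big(\frac{1-\Lambda(x)}{1-\Lambda(u)}\Big)^{1/p}\frac{t-a}{2}\ >\ t ,
\end{equation*}
with the middle term read as $+\oo$ if $\Lambda(u)=1$. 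This contradicts quasi-convexity in $(t,x)$. In your language, $\{T_\Lambda(\cdot,\cdot,X)\le t\}$ contains $(a,x)$ and $(t,y)$ but not their midpoint.

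The paper writes its second point as $(t,x)$. The point it actually evaluates is the midpoint of $(a,x)$ and $(t,y)$, and its argument is correct with that reading.
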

\begin{proof}
For any $X \in L^p$, we have by Proposition \ref{prop-inf} and \eqref{EVaR} that
\begin{eqnarray*}
\EVaR_\Lambda^p(X) &=& \min_{x \in \R} \left\{\EVaR_{\Lambda(x)}^p(X) \vee x\right\}   \\
&=& \min_{x \in \R} \left\{\min_{t \in \overline{\R}} \(t + \(\frac{1}{1-\Lambda(x)}\)^{1/p} \left\| (X-t)_+ \right\|_p\) \vee x \right\}   \\
&=& \min_{x \in \R} \min_{t \in \overline{\R}} \left\{\(t + \(\frac{1}{1-\Lambda(x)}\)^{1/p} \left\| (X-t)_+ \right\|_p\) \vee x \right\}.
\end{eqnarray*}
Thus, \eqref{EVaR_US} holds.
By Proposition \ref{prop-inf}, we know that there exists $x^* \in \R$ such that 
$$\EVaR_\Lambda^p(X) = \EVaR_{\Lambda(x^*)}^p(X) \vee x^*.$$
We consider two cases.
\begin{itemize}
\item \textbf{Case $\EVaR_{\Lambda(x^*)}^p(X) < x^*$:} Since $\EVaR_{\Lambda(x^*)}^p(X) < x^*$, we have $x^* = \EVaR_\Lambda^p(X)$.

\item \textbf{Case $\EVaR_{\Lambda(x^*)}^p(X) \geq x^*$:} Then $\EVaR_\Lambda^p(X) = \EVaR_{\Lambda(x^*)}^p(X) =: S$ by $\EVaR_{\Lambda(x^*)}^p(X) \geq x^*$.
By \eqref{eq-2026-1112} and right-continuous of $\Lambda$, we have $\EVaR_{\Lambda(S)}^p(X) \leq S \leq \EVaR_{\Lambda(S^-)}^p(X)$.
Then
\begin{equation*}
\EVaR_{\Lambda(S)}^p(X) \vee S = S = \EVaR_{\Lambda(x^*)}^p(X) \vee x^*.
\end{equation*}
This implies that $S$ is also a minimizer.
Note that the intersection point between the graph (linearly interpolated) of the function $x \mapsto \EVaR_{\Lambda(x)}^p(X)$ and the graph of the identity is unique.
Therefore $x^* = S$.
\end{itemize}
This confirms that the minimizer is given by $x^* \mapsto \EVaR_{\Lambda(x^*)}^p(X)$, and the corresponding minimizer $t^*$ is obtained via \eqref{EVaR_minimizer}.

Property (i) follows directly from the Minkowski inequality, so we omit its proof.

(ii) If the function $x \mapsto \(1/(1 - \Lambda(x))\)^{1/p}$ is convex, then $T_\Lambda(t, x, X)$ is convex in $x \in \R$ for all $(t, X) \in \overline{\R} \times L^p$ by the definition of convexity.
Below, we show the ``only if" part.
Let $\tilde{x} = \inf\{x \in \R : \Lambda(x) = 0\}$. 
Right-continuity of $\Lambda$ yields that $\Lambda(\tilde{x}) = 0$. 
We first prove $x \mapsto \(1/(1 - \Lambda(x))\)^{1/p}$ is convex in $x \in (-\oo, \tilde{x})$. 
Suppose for contradiction that for some $x_0, y_0 \in (-\oo, \tilde{x})$
\begin{equation}\label{eq-cx1}
\(\frac{1}{1 - \Lambda\(\frac{x_0+y_0}{2}\)}\)^{1/p} > \frac{1}{2}\(\frac{1}{1 - \Lambda(x_0)}\)^{1/p} + \frac{1}{2} \(\frac{1}{1 - \Lambda(y_0)}\)^{1/p}.
\end{equation}
It is easy to show that
\begin{equation*}
\lim_{t \to -\oo} \(t + \(\frac{1}{1-\Lambda(x)}\)^{1/p} \left\| (X-t)_+ \right\|_p\) = \oo ~~~ \forall x < \tilde{x}.
\end{equation*}
Then, there exists $t_0 \in \R$ such that
\begin{equation*}
t_0 + \(\frac{1}{1-\Lambda(x_0)}\)^{1/p} \left\| (X-t_0)_+ \right\|_p \geq x_0, ~~~ t_0 + \(\frac{1}{1-\Lambda(y_0)}\)^{1/p} \left\| (X-t_0)_+ \right\|_p \geq y_0,
\end{equation*}
and
\begin{equation}\label{eq-cx2}
t_0 + \(\frac{1}{1-\Lambda((x_0+y_0)/2)}\)^{1/p} \left\| (X-t_0)_+ \right\|_p \geq \frac{x_0+y_0}{2}.
\end{equation}
It is clear that \eqref{eq-cx2} and \eqref{eq-cx2} together contradict the fact that \eqref{T_Lambda} is convex in $x$. 
Thus, $x \mapsto \(1/(1 - \Lambda(x))\)^{1/p}$ is convex in $x \in (-\oo, \tilde{x})$. 
The continuous of $x \mapsto \(1/(1 - \Lambda(x))\)^{1/p}$ at $x = \tilde{x}$ is similar to Theorem 4 of \cite{BHWW25}.
Therefore, $x \mapsto \(1/(1 - \Lambda(x))\)^{1/p}$ is convex in $x \in \R$. 

Next, we prove statement (iii). It is straightforward that (a) $\Longrightarrow$ (b) and (c) $\Longrightarrow$ (d).
Below, we show (e) $\Longrightarrow$ (a) and (e) $\Longrightarrow$ (c).
If $\Lambda$ is constant, say $\Lambda(x) \equiv = \alpha$, then $\(1/(1-\Lambda(x))\)^{1/p}$ is constant and $T_\Lambda(t, x, X) = \(t + \(1/(1-\alpha)\)^{1/p} \left\| (X-t)_+ \right\|_p\) \vee x$. 
The first term is convex in $(t,X)$ (by Minkowski's inequality), and the pointwise maximum with the linear function $x$ preserves convexity. 
Hence $T_\Lambda$ is jointly convex in $(t,x,X)$, which implies both (a) and (c).
Next, we show (b) $\Longrightarrow$ (e).
Suppose $T_\Lambda$ is jointly quasi-convex in $(t,x)$ for every $X \in L^p$ and $\Lambda$ is decreasing and non-constant on $\R$ by contradiction. 
Then there exists $y < x \leq t$ such that $\Lambda(y) \geq \Lambda((x+y)/2) > \Lambda(x)$. 
Choose a random variable $X$ with $\P(X = a) = 1 - \P(X = t) = \Lambda(x)$ with $a < t$.
By $\Lambda(x) < 1$, we have
\begin{equation*}
\(a + \(\frac{1}{1-\Lambda(x)}\)^{1/p} \left\| (X-a)_+ \right\|_p\) \vee x = \(t + \(\frac{1}{1-\Lambda(x)}\)^{1/p} \left\| (X-t)_+ \right\|_p\) \vee x = t,
\end{equation*}
whereas
\begin{eqnarray*}
& &\(\frac{a+t}{2} + \(\frac{1}{1-\Lambda((x+y)/2)}\)^{1/p} \left\| \(X-\frac{a+t}{2}\)_+ \right\|_p\) \vee \frac{x+y}{2} \\
& &\quad = \(\frac{a+t}{2} + \(\frac{1-\Lambda(x)}{1-\Lambda((x+y)/2)}\)^{1/p} \(t-\frac{a+t}{2}\)\) \vee \frac{x+y}{2} > t,
\end{eqnarray*}
This contradicts the joint quasi-convexity of \eqref{T_Lambda} in $(t, x) \in \overline{\R} \times \R$ for all $X \in L^p$, and thus $\Lambda$ is constant on $\R$.
Finally, note that joint quasi-convexity in $(t,x,X)$ in particular implies quasi-convexity in $(t,x)$ for each fixed $X$. As shown in (b) $\Longrightarrow$ (e), this forces $\Lambda$ to be constant.
\end{proof}

\subsection{Worst-case Lambda EVaR under model uncertainty}

In practical risk management, the true distribution of a financial position is rarely known precisely. 
It is therefore important to evaluate the worst-case risk over a plausible set of distributions. 
We consider two classical uncertainty sets: one defined by a Wasserstein distance and another described by moment constraints.

The Wasserstein metric is widely used in optimal transport and distributionally robust optimization. 
In this subsection, we first consider uncertainty sets induced by the Wasserstein distance --- referred to as Wasserstein uncertainty. 
For two random variables $X \sim F$ and $Y \sim G$, the Wasserstein distance of order $k \in [1,\oo)$ is defined as
$$
W_k(X, Y) = W_k(F, G) = \inf\{\left\| X-Y\right\|_k:\ X \sim F,\, Y \sim G\} = \bigg(\int_0^1 |F_X^{-1}(\alpha) - F_Y^{-1}(\alpha)|^k \d\alpha \bigg)^{1/k},
$$
where $F_X^{-1}$ and $F_Y^{-1}$ are the quantile functions of $X$ and $Y$, respectively.

\begin{proposition}\label{prop43}
Let $p \geq 1, \delta \geq 0$, and let $\Lambda: \R \to [0,1)$ be a decreasing function. 
For any $X \in L^p$, 
\begin{equation}\label{EVaR_Wass}
\sup\left\{\EVaR^p_\Lambda(Y): W_p(X,Y) \leq \delta \right\}
= \sup_{x \in \mathbb{R}}\left\{\( \EVaR^p_{\Lambda(x)}(X) + \delta (1-\Lambda(x))^{-1/p} \) \wedge x \right\}.
\end{equation}
\end{proposition}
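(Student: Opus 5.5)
The plan is to reduce the statement to the known worst-case formula for the classical EVaR at a fixed level, and then interchange the two suprema. The first step is to establish, for each fixed $\alpha \in [0,1)$,
\begin{equation*}
\sup\left\{\EVaR^p_\alpha(Y): W_p(X,Y)\le \delta\right\} = \EVaR^p_\alpha(X) + \delta (1-\alpha)^{-1/p}.
\end{equation*}

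For the upper bound I would use the dual representation of \cite{PS20}: $\EVaR^p_\alpha(Y) = \sup_{\Q} \E[Y Z]$ with $Z = \d\Q/\d\P$ and $H_q(Z)\le \log\frac{1}{1-\alpha}$.
\begin{itemize}
\item Couple $X$ and $Y$ so that $\|X-Y\|_p \le \delta$ (up to an arbitrarily small slack), which is possible on an atomless space.
\item Then $\E[YZ] \le \E[XZ] + \|Y-X\|_p \|Z\|_q$ by H\"older.
\item The entropy constraint gives $\|Z\|_q = \exp\big(\tfrac{q-1}{q} H_q(Z)\big) \le (1-\alpha)^{-1/p}$, since $(q-1)/q = 1/p$.
\item The cases $q=\infty$ (i.e.\ $p=1$) and $q=1$ are handled the same way with the appropriate norms.
\end{itemize}
Taking suprema gives the bound.

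For attainment, first take $p>1$ and $\alpha<1$, and let $Z^*$ be an optimal density for $X$; the supremum is attained for $p>1$, and otherwise one works with $\varepsilon$-optimal densities. Set
\begin{equation*}
Y = X + \delta\, \frac{(Z^*)^{q-1}}{\|Z^*\|_q^{q-1}}.
\end{equation*}
This $Y$ satisfies $\|Y-X\|_p = \delta$ and achieves equality in H\"older. Equality in the entropy constraint, so that $\|Z^*\|_q = (1-\alpha)^{-1/p}$, holds for nondegenerate $X$. An alternative route is the primal formula \eqref{EVaR}: take $Y = X + c\,(X-t^*)_+^{p-1}$, normalised to $W_p$-distance $\delta$, with $t^*$ the minimiser from \eqref{EVaR_minimizer}. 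Degenerate cases, such as a constant $X$ where $Z^*$ may be taken uniform, are handled by perturbing $X$ on a set of probability $1-\alpha$. This attainment, uniform enough to be used inside the outer supremum, is the step I expect to be the main obstacle. The difficulty is approximation and attainment of the dual optimiser and the $p=1$ edge case. If needed, I would settle for $\varepsilon$-optimal $Y_\varepsilon$ for each fixed $\alpha$, which is all that is used below.

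Given the fixed-level formula, the proposition follows by swapping suprema:
\begin{eqnarray*}
\sup_{W_p(X,Y)\le\delta} \EVaR^p_\Lambda(Y) &=& \sup_{W_p(X,Y)\le\delta} \sup_{x\in\R} \left\{\EVaR^p_{\Lambda(x)}(Y)\wedge x\right\} \\
&=& \sup_{x\in\R} \left\{\Big(\sup_{W_p(X,Y)\le\delta}\EVaR^p_{\Lambda(x)}(Y)\Big)\wedge x\right\}.
\end{eqnarray*}
The second equality uses only that $s\mapsto s\wedge x$ is increasing and continuous, so it commutes with suprema. Substituting the fixed-level formula with $\alpha=\Lambda(x)<1$ gives \eqref{EVaR_Wass}. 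The hypothesis $\Lambda<1$ ensures every term is finite and that the $\varepsilon$-approximate maximisers exist for each $x$ separately. No uniformity in $x$ is required, because the suprema are exchanged pointwise.
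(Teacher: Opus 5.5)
Your proposal is correct and is essentially the paper's proof. The paper likewise reduces to the fixed-level identity $\sup\{\EVaR^p_\alpha(Y): W_p(X,Y)\le\delta\}=\EVaR^p_\alpha(X)+\delta(1-\alpha)^{-1/p}$, citing Proposition 1 of \cite{PPW12} where you rederive it via H\"older and $\|Z\|_q\le(1-\alpha)^{-1/p}$, and then interchanges the two suprema exactly as you describe.

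Your optional derivation of the lower bound needs three small corrections:
\begin{enumerate}
\item An optimal density can be slack even for nonconstant $X$. For example, take $X=\I_A$ with $\P(A)=1/2$, $\alpha=0.6$ and $Z^*=2\I_A$. So you must \emph{choose} a binding optimiser, which is always possible by moving mass within $\{X=\esssup X\}$.
\item The primal perturbation should be $c\,(X-t^*)_+\propto(Z^*)^{q-1}$ rather than $c\,(X-t^*)_+^{p-1}$; the latter attains H\"older equality only when $p=2$.
\item For $p=1$, where $(Z^*)^{q-1}$ is undefined, take $Y=X+\delta\,\I_B/(1-\alpha)$, with $B$ a tail event of probability $1-\alpha$ carrying the $\ES_\alpha$ density.
\end{enumerate}
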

\begin{proof}
From Proposition 1 of \cite{PPW12}, we have for every $\alpha \in [0,1)$
\begin{equation*}
	\sup\left\{\EVaR^p_\alpha(Y): W_p(X, Y) \leq \delta \right\}  = \EVaR^p_\alpha(X) + \delta (1-\alpha)^{-1/p}.
\end{equation*}
Combining with \eqref{eq-2026-1111}, we have
\begin{eqnarray*}
\sup\left\{\EVaR_\Lambda^p(Y): W_p(X, Y) \leq \delta \right\} &=& \sup\left\{\sup_{x \in \R} \left\{ \EVaR_{\Lambda(x)}^p(Y) \wedge x \right\}: W_p(X, Y) \leq \delta \right\}    \\
&=& \sup_{x \in \R} \left\{ \sup\left\{\EVaR_{\Lambda(x)}^p(Y): W_p(X, Y) \leq \delta \right\} \wedge x \right\}   \\
&=& \sup_{x \in \R} \left\{ \(\EVaR^p_{\Lambda(x)}(X) + \delta (1-\Lambda(x))^{-1/p} \) \wedge x \right\}.
\end{eqnarray*}
This completes the proof.
\end{proof}

\begin{remark}{\rm
The expression on the right-hand side of \eqref{EVaR_Wass} cannot in general be simplified to a $\EVaR^p_{\widetilde{\Lambda}}(X)$ with a single, fixed decreasing function $\widetilde{\Lambda}$ independent of $X$.  
The reason is that the additive term $\delta (1-\Lambda(x))^{-1/p}$ depends on $\Lambda(x)$ in a non-linear way, and the mapping $\alpha \mapsto \EVaR^p_\alpha(X) + \delta (1-\alpha)^{-1/p}$ does not, in general, coincide with $\EVaR^p_{\widetilde{\alpha}}(X)$ for a unique $\widetilde{\alpha}$ that is independent of the distribution of $X$.  
Therefore, \eqref{EVaR_Wass} is already the most compact representation of the worst-case $\Lambda$-EVaR under a Wasserstein-distance constraint.
}
\end{remark}

Next, we study the uncertainty set induced by mean and variance --- a classical and widely used setting in distributionally robust optimization. 
This framework assumes that only the first two moments of the random variable are known, reflecting many practical situations where full distributional information is unavailable. 
For $m \in \R$ and $v \geq 0$, denote by
$${\cal L}(m, v) := \left\{X \in L^2: \E[X] = m,\ \E[|X-m|^2] \le v^2\right\}.$$
\begin{proposition}\label{prop45}
Let $\Lambda: \R \to [0,1)$ be a decreasing function. 
For any $X \in L^2$, 
\begin{equation*}
\sup_{X \in {\cal L}(m, v)} \VaR_\Lambda(X) = \sup_{X \in {\cal L}(m, v)} \ES_\Lambda(X) = \sup_{X \in {\cal L}(m, v)} \EVaR^2_\Lambda(X) = \sup_{x \in \mathbb{R}}\left\{\(m + v \Lambda(x)^{1/2} (1-\Lambda(x))^{-1/2}\) \wedge x \right\}.
\end{equation*}
\end{proposition}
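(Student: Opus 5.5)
We would follow the same interchange-of-suprema strategy as in Proposition \ref{prop43}, now combined with the known worst-case values of VaR, ES and $\EVaR^2_\alpha$ over the mean-variance set ${\cal L}(m,v)$. Since $\sup_Y \sup_x \{\rho_{\Lambda(x)}(Y)\wedge x\} = \sup_x \{(\sup_Y \rho_{\Lambda(x)}(Y))\wedge x\}$ (the map $a\mapsto a\wedge x$ is increasing and continuous, so it commutes with suprema), it suffices to prove, for every fixed $\alpha\in[0,1)$,
\begin{equation*}
\sup_{Y\in{\cal L}(m,v)}\VaR_\alpha(Y)=\sup_{Y\in{\cal L}(m,v)}\ES_\alpha(Y)=\sup_{Y\in{\cal L}(m,v)}\EVaR^2_\alpha(Y)=m+v\sqrt{\alpha/(1-\alpha)}.
\end{equation*}
The claim then follows by inserting $\alpha=\Lambda(x)$ and applying the interchange to each of $\VaR_\Lambda$, $\ES_\Lambda$, $\EVaR^2_\Lambda$.

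For the fixed-$\alpha$ identity, we would use a chain of inequalities. We already have $\VaR_\alpha\le\ES_\alpha\le\EVaR^2_\alpha$; the second inequality is item (iii) of the properties proposition, and for the first we would use the level-$\alpha$ version of it. It therefore suffices to prove an upper bound for $\EVaR^2_\alpha$ and a matching lower bound for $\VaR_\alpha$, or at least a supremum of $\VaR_\alpha$ approaching that value.

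\emph{Upper bound.} We would use the dual form of $\EVaR^2_\alpha$: it is $\sup \E[YZ]$ over densities $Z\ge0$ with $\E Z=1$ and $\E[Z^2]\le 1/(1-\alpha)$, since $H_2(Z)=\log\E[Z^2]$. Then $\E[YZ]=m+\E[(Y-m)(Z-1)]\le m+v\sqrt{\E[Z^2]-1}\le m+v\sqrt{\alpha/(1-\alpha)}$ by Cauchy--Schwarz. Alternatively, plugging $t=m+v\sqrt{(1-\alpha)/\alpha}\cdot(\ldots)$ into \eqref{EVaR} works, but the dual route is cleaner.

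\emph{Lower bound.} We would take the Cantelli-extremal two-point law $Y=m-v\sqrt{(1-\alpha)/\alpha}$ with probability $\alpha$ and $Y=m+v\sqrt{\alpha/(1-\alpha)}$ with probability $1-\alpha$. This has mean $m$ and variance $v^2$. Its $\VaR_\alpha$ equals the lower point, so to obtain the upper point we instead use $\VaR_{\alpha}^+$ or perturb the weights slightly: put mass $\alpha-\epsilon$ on the low point and adjust both points to keep mean $m$ and variance $v^2$. Then $\VaR_\alpha$ equals the high point $m+v\sqrt{(\alpha-\epsilon)/(1-\alpha+\epsilon)}$, which tends to the target as $\epsilon\downarrow0$. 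Hence the supremum, though possibly not attained for VaR, equals the target. The case $\alpha=0$ is trivial, with all values equal to $m$.

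The main obstacle is this attainment subtlety at the VaR level: $\VaR_\alpha$ is left-continuous, so the extremal distribution gives only the lower point. We must argue via the $\epsilon$-approximation that the suprema still coincide. We would also check that the interchange with $\wedge x$ remains valid when a supremum is not attained, which it does, since $a\mapsto a\wedge x$ is continuous and increasing.
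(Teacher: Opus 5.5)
Your proof is correct. It shares the paper's outer step: pulling the supremum over ${\cal L}(m,v)$ inside $\sup_x\{\cdot\wedge x\}$, which is valid without attainment. Where you differ is in how you get the fixed-level identity.

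The paper imports it from the literature, one functional at a time: $\sup_{{\cal L}(m,v)}\VaR_\alpha=\sup_{{\cal L}(m,v)}\ES_\alpha=m+v\sqrt{\alpha/(1-\alpha)}$ comes from the Cantelli bounds, and the same value for $\EVaR^2_\alpha$ comes from a separate closed-form worst-case result (Li's Corollary 1).

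You instead use a sandwich:
\begin{itemize}
\item An upper bound only for the largest functional, $\EVaR^2_\alpha$. You read it off the defining dual problem: for $q=2$ the constraint is $\E[Z^2]\le 1/(1-\alpha)$, and Cauchy--Schwarz on $\E[(Y-m)(Z-1)]$ gives the bound.
\item A lower bound only for the smallest functional, $\VaR_\alpha$, via two-point laws with mass $\alpha-\epsilon$ on the low atom.
\end{itemize}

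What this buys you:
\begin{itemize}
\item The argument is self-contained.
\item It explains why the three worst cases coincide. For the Cantelli two-point law, the Cauchy--Schwarz-optimal density is the indicator of the upper atom scaled by $1/(1-\alpha)$. That density is also an admissible ES density, so all bounds are tight simultaneously.
\item It handles explicitly that the supremum for the left-continuous $\VaR_\alpha$ is not attained, which the paper's citation-based proof passes over.
\end{itemize}

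The cost is small. You need a separate check at $\alpha=0$: there $\VaR_0=\essinf$, so the supremum $m$ is attained by the constant $m$, though not every value equals $m$. You also need the ordering $\VaR_\alpha\le\ES_\alpha\le\EVaR^2_\alpha$. This is item (iii) applied with constant $\Lambda\equiv\alpha$. Alternatively, it follows directly because ES densities satisfy $Z\le 1/(1-\alpha)$ and hence $\E[Z^2]\le 1/(1-\alpha)$.
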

\begin{proof}
From Example 9 in \cite{PWW25}, when the mean $m$ and variance $v$ are specified, the Cantelli bounds for VaR and ES are
\begin{equation*}
\sup_{X \in {\cal L}^2(m, v)} \VaR_{\alpha}(X) = \sup_{X \in {\cal L}^2(m, v)} \ES_{\alpha}(X) =  m + v \alpha^{1/2}(1-\alpha)^{-1/2} ~~~ \alpha \in [0, 1).
\end{equation*}
Furthermore, Corollary 1 in \cite{Li18} readily yields
\begin{eqnarray*}
	\sup_{X \in {\cal L}(m, v)} \EVaR^2_{\alpha}(X) = m + v \alpha^{1/2}(1-\alpha)^{-1/2} ~~~ \alpha \in [0, 1).
\end{eqnarray*}
Consequently,
\begin{eqnarray*}
\sup_{Y \in {\cal L}(m, v)} \VaR_\Lambda(Y) &=& \sup_{Y \in {\cal L}(m, v)} \sup_{x \in \R} \left\{ \VaR_{\Lambda(x)}(Y) \wedge x \right\} = \sup_{x \in \R} \left\{ \sup_{Y \in {\cal L}(m, v)} \VaR_{\Lambda(x)}(Y) \wedge x \right\}   \\
&=& \sup_{x \in \R} \left\{ \(m + v \Lambda(x)^{1/2}(1-\Lambda(x))^{-1/2}\) \wedge x \right\}.
\end{eqnarray*}
The corresponding expressions for $\ES_\Lambda$ and $\EVaR^2_\Lambda$ are obtained analogously.
\end{proof}

\begin{remark}[Limitations on higher-moment constraints and general $p$]{\rm 
The elegant coincidence of worst-case $\Lambda$-VaR,  $\Lambda$-ES and  $\Lambda$-EVaR of order 2 under mean-variance constraints relies on two special features:
\begin{itemize}
\item Second-order nature of the Cantelli bound. The closed-form expression $m + v \Lambda(x)^{1/2}(1-\Lambda(x))^{-1/2}$ stems from the classical one-sided Chebyshev (Cantelli) inequality, which is intrinsically tied to the second moment. 
For constraints involving higher moments (e.g., skewness or kurtosis), no such universal bound exists, and the worst-case distribution typically depends on the specific risk measure being optimized.

\item The general $p$. For $p \neq 2$, the worst-case problem $\(\E[(X-t)_+^p]\)^{1/p}$ generally admits no explicit solution when only mean and higher-moment know.
\end{itemize}
}
\end{remark}

\section{Conslusion}\label{sec5}

This paper introduces and studies the Lambda extension of the R\'{e}nyi entropic value-at-risk ($\Lambda$-EVaR), a flexible family of risk measures that allows the confidence level to vary with the loss threshold via a decreasing function $\Lambda$. 

The main contributions are threefold. 
 
First, we have defined $\Lambda$-EVaR and established its fundamental properties, including monotonicity, cash subadditivity, quasi-convexity, and $L^p$-continuity. 
A key result (Proposition \ref{prop35}) shows that convexity, concavity in mixtures and cash additivity of $\Lambda$-EVaR are all equivalent to $\Lambda$ being constant, confirming that the Lambda framework genuinely extends the classical EVaR.  

Second, we have derived a dual representation (Theorem \ref{thm41}) and an extended Rockafellar-Uryasev formula (Theorem \ref{EVaR_US}) for $\Lambda$-EVaR. 
The latter provides a tractable two dimensional minimization problem that facilitates numerical computation and optimization.  

Third, we have examined worst-case values of $\Lambda$-EVaR under two standard models of distributional uncertainty: Wasserstein balls and mean-variance sets. 
For Wasserstein uncertainty, the worst-case $\Lambda$-EVaR admits a closed-form expression (Proposition \ref{prop43}) that cannot be reduced to a single $\Lambda$-EVaR with a new fixed $\Lambda$. 
Under mean-variance constraints, the worst-case values of $\Lambda$-VaR, $\Lambda$-ES and $\Lambda$-EVaR (with $p=2$) coincide and are given by a simple Cantelli-type bound (Proposition \ref{prop45}). 
This coincidence, however, does not extend to higher-order moment constraints or to general $p\neq2$.

The $\Lambda$-EVaR family thus offers a unified framework and incorporating the sensitivity to higher moments afforded by the R\'{e}nyi entropy. 
Future research could explore its statistical estimation, its performance in portfolio optimization under ambiguity, and its potential applications in regulatory capital frameworks where varying confidence levels are economically justified.

\section*{Acknowledgements}

Z. Zou is supported by National Natural Science Foundation of China (No. 12401625), and the Fundamental Research Funds for the Central Universities (No. WK2040000108).

\section*{Disclosure statement}

No potential conflict of interest was reported by the authors.

\end{document}